\newcommand{\Ab}{\mathbf A}
\newcommand{\Fb}{\mathbf F}
\newcommand{\R}{\mathbb R}
\newcommand{\C}{\mathbb C}
\newcommand{\dsp}{\displaystyle}
\DeclareMathOperator{\IM}{Im}
\DeclareMathOperator{\curl}{curl}
\DeclareMathOperator{\Div}{div}
\DeclareMathOperator{\dist}{dist}
\newtheorem{thm}{Theorem}[section]
\newtheorem{prop}[thm]{Proposition}
\newtheorem{theorem}[thm]{Theorem}
\theoremstyle{remark}
\numberwithin{equation}{section}
\title[2D Ginzburg-Landau  functional]{Breakdown of superconductivity in a magnetic field with self-intersecting zero set}
\author[K. Attar]{}
\author[]{Kamel Attar}
\begin{document}
	\everymath{\displaystyle}
\begin{abstract}
We prove that the lowest eigenvalue of the Laplace operator with a magnetic field having a self-intersecting zero set is a monotone function of the parameter defining the strength of the magnetic field, in a neighborhood of infinity. We apply this monotonicity result on the study of the transition from superconducting to normal states for the Ginzburg-Landau model, and prove that  the transition occurs at a unique threshold value of the applied magnetic field.
\end{abstract}
\maketitle 
\section{Introduction}
\subsection{Breakdown of superconductivity.}
Superconductivity is a state of metals and alloys that appear below a certain critical temperature $T_c$. When the temperature is below $T_c$ the body must be in the superconducting state; if an applied magnetic field (with small intensity) attempts to destroy the superconductivity, an induced magnetic field appears and repels the applied magnetic field. If the intensity of the applied magnetic field is increased gradually past a critical value, the superconductor can no longer resist the magnetic field and the appearance of superconductivity in the metal decreases gradually past critical conducting states (superconducting state, mixed conducting state and normal conducting state). Eventually, a large magnetic field destroys superconductivity from the sample (see Figure \ref{df-states}).

\begin{figure}[ht!]
	\begin{center}
		\includegraphics[scale=1]{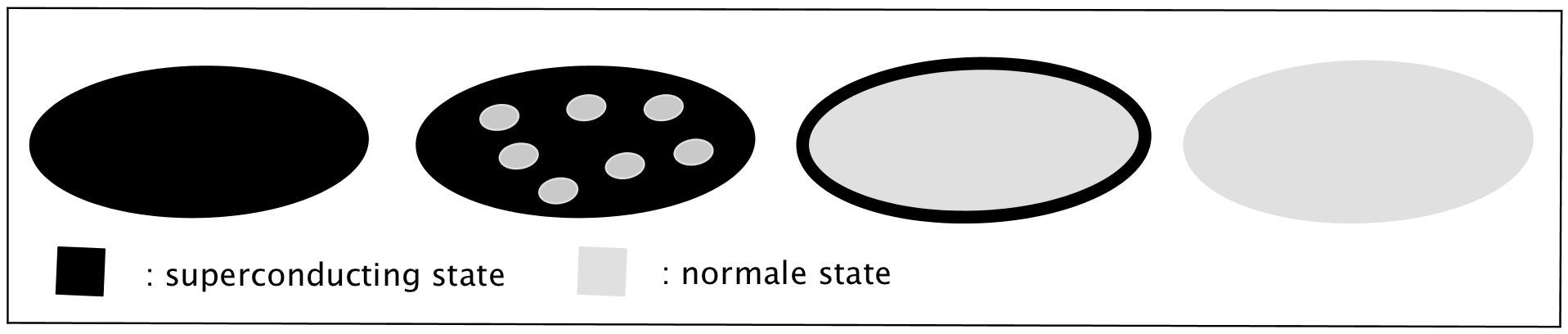}
		\caption{Schematic representation of a material shows the difference states.}
		\label{df-states}
	\end{center}
\end{figure}
In the case of a uniform applied magnetic field, phase transitions associated with a type II superconductor are marked by three thresholds:
\begin{itemize}
	\item $H_{C_1}$ is the intensity of the magnetic field where a superconductor switches from the superconducting state to the mixed conducting state (vortex state). In this case the superconductor allows the applied magnetic field to pass through small regions of the sample \cite{SaSe}.
	\item $H_{C_2}$ is the intensity of the  magnetic field that corresponds to the transition from superconductivity located in regions far from the boundary to surface superconductivity \cite{Pan, SaSe2, FK2}.
	\item  $H_{C_3}$ is the intensity of the magnetic field that corresponds to the transition from surface superconductivity to the normal conducting state; in this case the superconductivity can no longer exist in the superconductor \cite{Pan, FH3, CR}.
\end{itemize}
Non uniform magnetic fields is the subject of recent research:
\begin{itemize}
\item Non-vanishing smooth magnetic fields \cite{FH3, Ray, LP}\,;
\item Sign-changing smooth magnetic fields \cite{PaKw, HK}\,;
\item Non-smooth magnetic fields  \cite{KA1, AsKaSu, KHWE}\,.
\end{itemize}
In this contribution, we study smooth magnetic fields with a self-intersecting zero set.

\subsection{Magnetic field with self-intersecting zeros} Suppose that $\Omega\subset\R^2$ is bounded, open, simply connected with smooth boundary. Consider a vertical magnetic field of the form
\[B_0\vec{z} \]
where the function $B_{0}\in C^{\infty}(\overline{\Omega})$ has a non-trivial zero set
\begin{equation}\label{gamma}
\Gamma=\{x\in\overline{\Omega}: B_{0}(x)=0\}\qquad\text{and}\qquad \Sigma=\{x\in\Gamma: \nabla B_{0}(x)=0\}
\end{equation}
and satisfies  the following properties (see  \cite{MJN})
\[\Gamma\neq \varnothing,\quad \Sigma\neq\varnothing{\rm ~is ~finite,}\quad \Sigma\cap \partial\Omega= \varnothing \]
and
\begin{equation}\label{B(x)}
\left\{
\begin{array}{rll}
|B_{0}| + |\nabla B_0 | >0&\mbox{ in } \overline{\Omega}\setminus \Sigma& (a)\\
\nabla B_{0}\times\vec{n}\neq 0 &\mbox{ on } \Gamma\cap\partial\Omega&(b)\\
|\nabla B_{0}| + |{\rm Hess} B_0 | >0&\mbox{ on } \Gamma& (c)\\
\vec{\tau_{1}}\times \vec{\tau_{2}}\neq 0 &\mbox{ on } \Sigma&(d)\,.
\end{array}
\right.
\end{equation}
Here $ \vec{n} $ is the unit normal vector to the boundary, $ \vec{\tau}_{1}$ and  $ \vec{\tau}_{2} $ are the unit tangent vector on the intersection point $ x\in\Sigma $, and $ {\rm Hess} B_0 $ is the Hessian matrix of the magnetic field at the point $ x $ which has two non-zero eigenvalues $ \lambda^{\rm Hess}_{1}(x) $ and $ \lambda^{\rm Hess}_{2}(x) $ with opposite signs  and labeled as follows
\begin{equation}\label{eq:comp-lambda-Hess}
|\lambda^{\rm Hess}_{1}(x)|\leq |\lambda^{\rm Hess}_{2}(x)|\quad(x\in\Sigma)\,.
\end{equation}
Assumption $ \eqref{B(x)}_{(a)} $ implies that for any open set $\omega$ relatively compact in $\Omega$, $\Gamma\cap\omega$ is  either empty, or consists of a union of smooth curves and the quantity $ |\nabla B_0| $ does not vanish on $ \Gamma $. We see from $ \eqref{B(x)}_{(b)} $ that the $B_0$ is allowed to vanish on a finite number of boundary points and that the curve $ \Gamma $ cannot intersect tangentially $ \partial\Omega$. Assumptions $ \eqref{B(x)}_{(c)} $ and $ \eqref{B(x)}_{(d)} $ tell us that the smooth curves can intersect on isolated points where $ |\nabla B_0| $ vanishes.

The case where the magnetic field vanishes non-degenerately along a smooth  non self intersecting  curve, is the subject of numerous works in the contexts of shape optimization\cite{M}, superconductivity \cite{PaKw, HK, KA3} and   semiclassical spectral asymptotics \cite{DR, BNR}.

\begin{figure}[ht!]
	\begin{center}
		\includegraphics[scale=0.4]{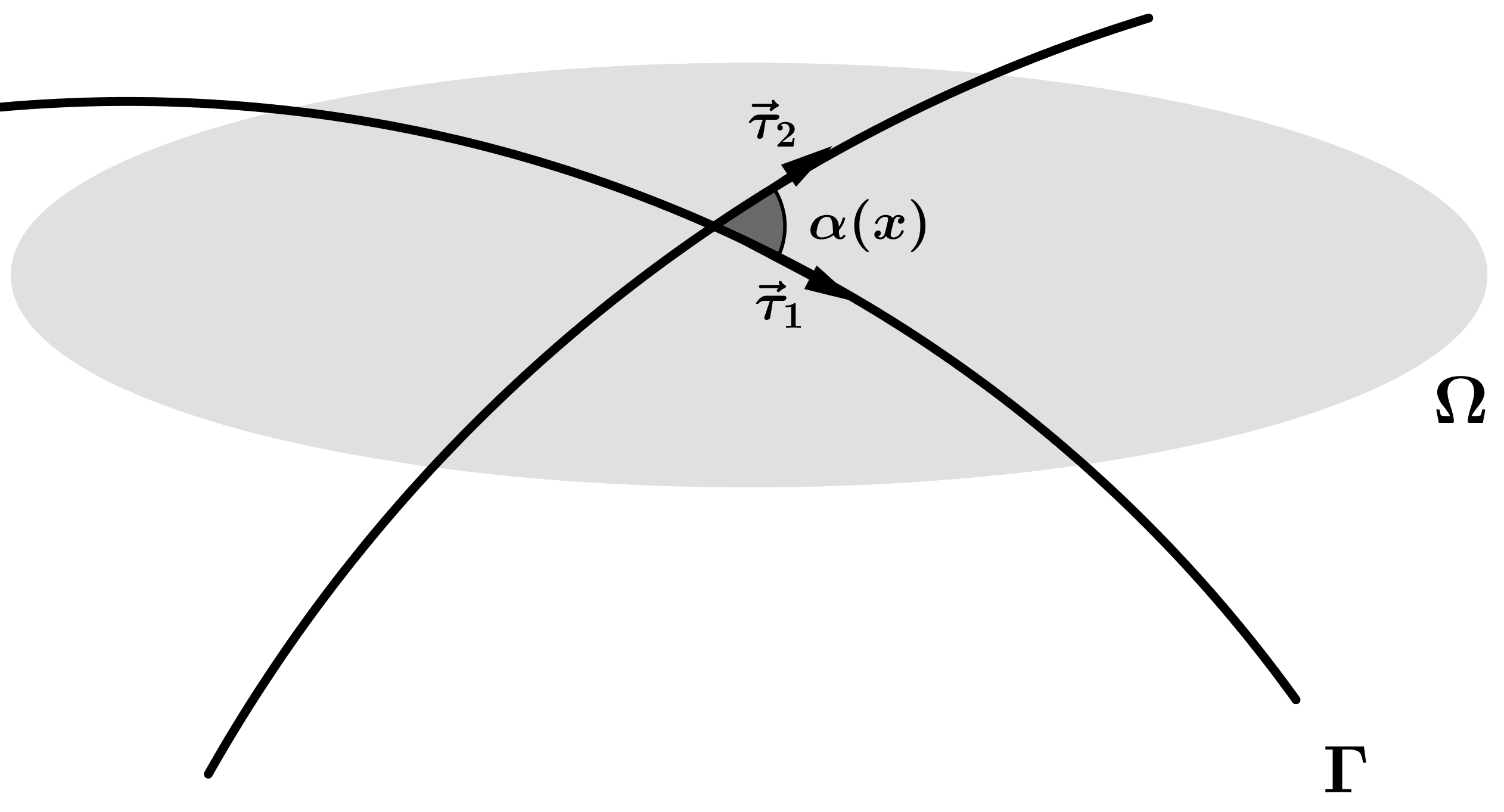} 
		\caption{Schematic representation of a domain subject to a variable magnetic field that vanishes along the curve $\Gamma$ which crosses itself inside the domain.}
		\label{domain1}
	\end{center}
\end{figure}

\subsection{Magnetic Laplacian and strong diamagnetism}

Consider the unique vector field $\Fb\in H^1(\Omega;\R^2)$  satisfying the following properties
\begin{equation}\label{eq:def-F}
\nu\cdot\Fb=0~{\rm on~}\partial\Omega, \quad {\rm div}\Fb=0~\&~\curl\Fb=B_0{~\rm in~}\Omega\,,
\end{equation}
where $\nu$ is the unit outward normal of the boundary of $\Omega$.

Let us introduce the  magnetic Schr\"odinger operator 
\begin{equation}\label{def:P}
	P_{b\Fb }^{\Omega}=-(\nabla-ib\Fb)^{2}\,,
\end{equation}
with domain
\[
D(P_{b\Fb }^{\Omega}):=\{u\in\mathcal{V}(\Omega)~:~ P_{b\Fb}^{\Omega}u\in L^{2}(\Omega)~\&~ \nu\cdot(\nabla-i b\Fb)u=0~{\rm on}~\partial \Omega\}\,.
\]
The lowest eigenvalue (ground state energy) of this operator is 
\begin{equation}\label{def:mu1}
	\mu_{1}(b \Fb)=\inf_{\substack{\phi\in H^{1}(\Omega)\\ \phi\neq 0}}\left(\frac{\mathcal{Q}_{b \Fb}^{\Omega}(\phi)}{\|\phi\|^{2}_{L^{2}(\Omega)}}\right)\,,
\end{equation}
where
\begin{equation}\label{Quad}
	\mathcal{Q}_{b \Fb}^{\Omega}(\phi)=\int_{\Omega}|(\nabla-ib \Fb)\phi|^{2}\,dx\,.
\end{equation}
The large field limit $b\to+\infty$  can be transformed to a semi-classical limit by introducing the small  parameter $h=\frac1b$. The full asymptotic expansion of the lowest eigenvalue $P_{b\Fb}^\Omega$ is derived by Dauge, Miqueu and Raymond \cite{MJN}.
Based on it, we prove the monotonicity of $ \mu_{1}(b \Fb) $ with respect to $b$ in a neighborhood of $+\infty$. This property is named \emph{strong diamagnetism} in the literature \cite{FH3, FH, FH0}.

\begin{theorem}\label{main-th} There exists $ b _0>0$, such that the function $b \longmapsto \mu_{1}(b \Fb)$ is monotone increasing on $ [b _0, +\infty [ $.
\end{theorem}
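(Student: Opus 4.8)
The plan is to combine a first‑variation (Feynman--Hellmann) identity for $\mu_{1}(b\Fb)$ with the precise spectral and concentration information contained in the expansion of Dauge--Miqueu--Raymond \cite{MJN}. Recall the underlying mechanism: near a self‑intersection point $x\in\Sigma$ the field vanishes quadratically, $B_{0}(x+y)\approx\tfrac12\,y^{\mathsf T}(\mathrm{Hess}\,B_{0})(x)\,y$, so a natural vector potential is homogeneous of degree three; the rescaling $y=b^{-1/4}z$ then turns $P_{b\Fb}^{\Omega}$, to leading order, into $b^{1/2}\mathcal L_{x}$, where $\mathcal L_{x}=-(\nabla_z-iA_{x})^{2}$ is a fixed model operator on $\R^{2}$ with cubic potential $A_{x}$ determined by $\lambda^{\rm Hess}_{1}(x),\lambda^{\rm Hess}_{2}(x)$. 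This produces the leading term $\mu_{1}(b\Fb)=\mathcal E_{0}\,b^{1/2}+o(b^{1/2})$ with $\mathcal E_{0}=\min_{x\in\Sigma}\inf\Spec\mathcal L_{x}>0$; since the quadratic vanishing at $\Sigma$ is deeper than the linear vanishing along $\Gamma\setminus\Sigma$ (whose scale is the larger power $b^{2/3}$), the ground state concentrates near $\Sigma$.

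First I would record that a plain asymptotic expansion of $\mu_{1}(b\Fb)$ is by itself insufficient for monotonicity, since the remainder need not be controlled after differentiation; the core of the argument is therefore a derivative formula. Let $\psi_{b}$ be an $L^{2}$‑normalized ground state, simple for large $b$ with a spectral gap guaranteed by \cite{MJN}. Differentiating the quadratic form $\mathcal Q_{b\Fb}^{\Omega}$, whose form domain $H^{1}(\Omega)$ is independent of $b$, at the minimizer $\psi_{b}$ and using $\Div\Fb=0$ yields, at points of differentiability,
\[
b\,\frac{d}{db}\mu_{1}(b\Fb)=\big\langle\psi_{b},\,(2ib\,\Fb\!\cdot\!\nabla+2b^{2}|\Fb|^{2})\psi_{b}\big\rangle .
\]
The strategy is to show the right‑hand side equals $\tfrac12\,\mu_{1}(b\Fb)(1+o(1))$, the exact analogue of the virial relation for the homogeneous model: since $A_{x}$ is homogeneous of degree three, the operator $-(\nabla-i\beta A_{x})^{2}$ has ground‑state energy $\beta^{1/2}\inf\Spec\mathcal L_{x}$, so that $\beta\,\partial_{\beta}\big(\beta^{1/2}\inf\Spec\mathcal L_{x}\big)=\tfrac12\,\beta^{1/2}\inf\Spec\mathcal L_{x}$. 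Transferring this identity to the true operator gives $\frac{d}{db}\mu_{1}(b\Fb)=\frac{1}{2b}\mu_{1}(b\Fb)(1+o(1))>0$ for all large $b$, whence Theorem \ref{main-th} on some $[b_{0},+\infty)$.

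The transfer is carried out by localizing near $\Sigma$. Using the Agmon‑type decay estimates of \cite{MJN}, I would show that $\psi_{b}$ is exponentially small outside a union of $b^{-1/4}$‑neighborhoods of the points of $\Sigma$, so that in the quadratic expression above only these neighborhoods contribute up to negligible errors. On each one I would pass to the rescaled coordinates, replace $B_{0}$ by its quadratic part (the cubic and higher remainders contributing at lower order after rescaling), identify the leading operator with $b^{1/2}\mathcal L_{x}$, and match the integrand of the derivative formula with $\tfrac12 b^{-1}$ times the corresponding model energy evaluated at the model ground state; minimizing over $x\in\Sigma$ then reproduces $\tfrac12\mu_{1}(b\Fb)(1+o(1))$.

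The hard part will be exactly this transfer. The derivative formula involves $\Fb\!\cdot\!\nabla\psi_{b}$ and $|\Fb|^{2}|\psi_{b}|^{2}$, i.e. first derivatives and the square of the potential, so $L^{2}$‑concentration of $\psi_{b}$ is not enough: one needs localized \emph{energy} (weighted $H^{1}$) estimates at the correct $b^{-1/4}$ length scale, together with quantitative control of the non‑homogeneous corrections to $B_{0}$ and of the model‑approximation remainder, all uniform in $b$. A secondary point is the justification of differentiability of $b\mapsto\mu_{1}(b\Fb)$: where the ground state is simple the eigenvalue is real‑analytic, and the spectral gap furnished by \cite{MJN} makes $\psi_{b}$ well defined and $\mu_{1}$ smooth on $[b_{0},+\infty)$, while at any isolated exceptional value one argues with one‑sided derivatives and continuity. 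Once the sign of the derivative is secured, Theorem \ref{main-th} follows.
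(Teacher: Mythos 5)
Your overall strategy (a Feynman--Hellmann identity plus localization near $\Sigma$ and a virial relation for the homogeneous model operator) is genuinely different from the paper's proof, which never differentiates anything evaluated on the eigenfunction: the paper combines the full expansion \eqref{ful.exp.mu1} of Dauge--Miqueu--Raymond with Kato's perturbation theory (existence of one-sided derivatives and the inequality \eqref{com.Le.Ri}) and the abstract Proposition~\ref{pro.fh} of Fournais--Helffer, which converts the expansion $\lambda(\beta)=\alpha_0\beta+g(\beta)+o(1)$ in the variable $\beta=\sqrt{b}$ directly into $\lim_{\beta\to\infty}\frac{d}{d\beta}\lambda^{\pm}(\beta)=\alpha_0>0$. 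In particular, your opening claim that ``a plain asymptotic expansion of $\mu_1(b\Fb)$ is by itself insufficient for monotonicity'' is precisely what Proposition~\ref{pro.fh} refutes in this setting: since $\mu_1$ is locally the minimum of finitely many analytic branches, its one-sided derivatives exist and satisfy \eqref{com.Le.Ri}, and this extra structure is enough to differentiate the expansion. That is the entire content of the paper's short argument.

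As a proof, your proposal has a genuine gap at the step you yourself flag as ``the hard part''. The identity $b\,\frac{d}{db}\mu_1(b\Fb)=\tfrac12\,\mu_1(b\Fb)(1+o(1))$ is not a routine consequence of $L^2$- or even $H^1$-localization of $\psi_b$ near $\Sigma$. The two terms in your derivative formula are not separately gauge-invariant, and the potential $\Fb$ of \eqref{eq:def-F} has no reason to vanish on $\Sigma$; hence $2b^{2}\int_\Omega|\Fb|^{2}|\psi_b|^{2}\,dx$ is a priori of order $b^{2}$, while the sum is claimed to be of order $b^{1/2}$. One must therefore exhibit a cancellation over roughly three powers of $b$ between the cross term and the $|\Fb|^{2}$ term (equivalently, fix a local gauge adapted to each point of $\Sigma$ and control the resulting error), together with weighted $H^{1}$ estimates at scale $b^{-1/4}$ and uniform control of the non-homogeneous corrections to $B_0$ --- none of which is carried out, and none of which is available off the shelf from \cite{MJN}, which supplies eigenvalue asymptotics rather than the localized energy estimates your argument needs. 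The simplicity and spectral-gap assertions you attribute to \cite{MJN} would moreover require the minimum in \eqref{m_eig} to be attained non-degenerately, which is not assumed here, and the dismissal of possible eigenvalue crossings as ``isolated exceptional values'' is not justified uniformly near $+\infty$. So the plan is plausible in outline but does not constitute a proof; the paper's route via Proposition~\ref{pro.fh} is designed exactly to bypass these difficulties.
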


Strong diamagnetism has been proved earlier under different assumptions on $ \Omega $ and $ B_0 $  \cite{FH,BF,  WA}. Counterexamples of strong diamagnetism do exist   \cite{FP, HK2,  KHPA2, KHSU}.

\subsection{Ginzburg-Landau model}

Consider the Ginzburg-Landau functional
\begin{equation}\label{eq-2D-GLf}
	\mathcal E_{\kappa,H,B_{0}}(\psi,\Ab)= \int_\Omega\left( |(\nabla-i\kappa
	H\Ab)\psi|^2+\frac{\kappa^2}{2}(1-|\psi|^2)^2\right)\,dx
	+\kappa^2H^2\int_{\Omega}|\curl\Ab-B_0|^2\,dx\,.
\end{equation}
The vector field $\Ab$ is called the magnetic potential which  describes the induced magnetic field in the sample via
$$\curl \Ab=\overrightarrow{\nabla}\times \Ab\,.$$  
The positive number $\kappa$ is the characteristic scale of the superconductor that distinguishes the material of the sample; we will study the case of materials of type II where $\kappa$ is sufficiently large ($\kappa\to\infty$). The complex-valued function $\psi$ is the order parameter that rises in the superconductor phase, where the modulus squared $|\psi(x)|^2$ can be interpreted as the local density of the superconducting electron Cooper pairs as follows:
\begin{itemize}
	\item If $|\psi(x)|\neq 0$, then at location $x$, the material is in the superconducting state.
	\item If $|\psi(x)|=0$, then at location $x$, there exists no Cooper pairs, and the material is in the normal state.
	\item If $|\psi(x)|$ has zeros but does not vanish identically, the material is in the {\it mixed state}.
\end{itemize}
Since the functional is invariant under gauge transformations $(\psi,\Ab)\to (e^{i\kappa H\chi}\psi, \Ab-\nabla \chi)$, it is enough to consider configurations $(\psi,\Ab)$ in the space $H^1(\Omega;\C)\times H^1_{\Div}(\Omega)$ of sobolev functions, where
\begin{equation}\label{eq-2D-hs} H^1_{\Div}(\Omega)=\Big\{\Ab=(\Ab_{1},\Ab_{2})\in
	H^1(\Omega)\times H^1(\Omega) ~:~\Div \Ab=0~{\rm in}~\Omega \,,\,\Ab\cdot\nu=0~{\rm
		on}\,
	\partial\Omega \,\Big\}\,,
\end{equation}
with  $\nu$ being the unit interior normal vector of
$\partial\Omega$.

The equilibrium state of the system will be where the total energy $\mathcal E_{\kappa,H,B_{0}}$ is minimal, we denote by $(\psi,\Ab)$ the minimizing configuration of $\mathcal E_{\kappa,H,B_{0}}$. When $\psi=0$, the minimizing configuration is in the form $(0,\Fb)$ with $\Fb$ in $H^1_{\Div} (\Omega)$  such that $\curl\Fb=B_{0}$. Notice that this solution is unique (up to
a gauge transformation) and we call the pair $(0,\Fb)$ the normal state.

\subsection*{Ginzburg-Landau equations}
Minimizers, $(\psi,\Ab)\in H^1(\Omega;\C)\times H^1_{\Div}(\Omega)$ of \eqref{eq-2D-GLf} have to satisfy the Ginzburg-Landau equations,
\begin{equation}\label{eq-2D-GLeq}
	\left\{
	\begin{array}{llll}
		-(\nabla-i\kappa H\Ab)^2\psi=\kappa^2\, (1-|\psi|^2)\, \psi&{\rm in}&
		\Omega&(a)
		\\
		-\nabla^{\bot}\curl(\Ab-\Fb)=\displaystyle\frac1{\kappa
			H}\IM(\overline{\psi}\,(\nabla-i\kappa
		H\Ab)\psi) &{\rm in}&\Omega&(b)\\
		\nu\cdot(\nabla-i\kappa H\Ab)\psi=0&{\rm
			on}&\partial\Omega&(c)\\
		\curl\Ab=\curl\Fb&{\rm on}&\partial\Omega&(d)\,.
	\end{array}\right.
\end{equation}
Here,
$\nabla^{\bot}\curl\Ab=\Big(\partial_{x_2}(\curl\Ab),
-\partial_{x_1}(\curl\Ab)\Big)$. The boundary conditions state that the supperconducting current can not flow outside  the sample (the sample is surrounded by vacuum).

A  solution   $(\psi,\Ab)$ of \eqref{eq-2D-GLeq} is said to be trivial if $\psi=0$ everywhere (consequently $\curl \Ab=b $ everywhere). Let us recall a simple criterion \cite[Lemma 2.1]{LP} for the existence of a non-trivial solution of the Ginzburg-Landau equations in \eqref{eq-2D-GLeq}.

\begin{prop}[Sufficient condition for existence of non-trivial solutions]\label{prop:LP}
	For all $ \kappa $, $ H>0 $, if $\mu_{1}(\kappa H\Fb)<\kappa^2$, then every minimizer of $\mathcal E_{\kappa,H,B_{0}}$ is non-trivial.
\end{prop}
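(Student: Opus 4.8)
The plan is to argue by an energy comparison with the normal state. First I would record that any trivial configuration $(0,\Ab)$ with $\Ab\in H^1_{\Div}(\Omega)$ has energy
\[
\mathcal E_{\kappa,H,B_{0}}(0,\Ab)=\frac{\kappa^2}{2}|\Omega|+\kappa^2H^2\int_\Omega|\curl\Ab-B_0|^2\,dx\ge\frac{\kappa^2}{2}|\Omega|\,,
\]
since the first integral in \eqref{eq-2D-GLf} vanishes for $\psi=0$; equality holds precisely for $(0,\Fb)$, the normal state. Hence it suffices to exhibit a single configuration whose energy lies strictly below $\frac{\kappa^2}{2}|\Omega|$: no trivial configuration could then be a minimizer, and the conclusion follows.

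Next I would build a suitable test state. Let $\phi\in H^1(\Omega)$ be a ground state of $P_{\kappa H\Fb}^{\Omega}$, normalized by $\|\phi\|_{L^2(\Omega)}=1$, so that by \eqref{def:mu1}--\eqref{Quad} one has $\mathcal Q_{\kappa H\Fb}^{\Omega}(\phi)=\mu_{1}(\kappa H\Fb)$. I would take the pair $(t\phi,\Fb)$ with $t>0$ a small real parameter. Substituting into \eqref{eq-2D-GLf}, using $\curl\Fb=B_0$ and expanding $\int_\Omega(1-t^2|\phi|^2)^2\,dx=|\Omega|-2t^2\|\phi\|_{L^2(\Omega)}^2+t^4\|\phi\|_{L^4(\Omega)}^4$, gives
\[
\mathcal E_{\kappa,H,B_{0}}(t\phi,\Fb)=\frac{\kappa^2}{2}|\Omega|+t^2\bigl(\mu_{1}(\kappa H\Fb)-\kappa^2\bigr)+\frac{\kappa^2}{2}t^4\|\phi\|_{L^4(\Omega)}^4\,.
\]

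The decisive step is the sign of the $t^2$ coefficient. By hypothesis $\mu_{1}(\kappa H\Fb)-\kappa^2<0$, so the quadratic term is negative while the quartic term is positive; choosing $t>0$ small enough makes
\[
\mathcal E_{\kappa,H,B_{0}}(t\phi,\Fb)-\frac{\kappa^2}{2}|\Omega|=t^2\Bigl(\mu_{1}(\kappa H\Fb)-\kappa^2+\frac{\kappa^2}{2}t^2\|\phi\|_{L^4(\Omega)}^4\Bigr)<0\,.
\]
Therefore the infimum of $\mathcal E_{\kappa,H,B_{0}}$ is strictly smaller than $\frac{\kappa^2}{2}|\Omega|$, the least energy attainable by a trivial configuration. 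Consequently no minimizer can satisfy $\psi\equiv0$, i.e. every minimizer is non-trivial.

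As for difficulty, there is no serious obstacle: the argument is a one-parameter energy comparison, and the only input beyond direct computation is the existence of a normalized ground state $\phi$ realizing $\mu_{1}(\kappa H\Fb)$, which follows from the compactness of the resolvent of $P_{\kappa H\Fb}^{\Omega}$ on the bounded domain $\Omega$. The one point to keep in mind is that $\|\phi\|_{L^4(\Omega)}$ is finite---automatic by the Sobolev embedding in two dimensions---so that the quartic correction is harmless for small $t$.
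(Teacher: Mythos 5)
Your proof is correct. Note that the paper itself does not prove Proposition~\ref{prop:LP}: it is quoted from \cite[Lemma 2.1]{LP}, and your argument is precisely the standard one used there --- compare the energy of the trial pair $(t\phi,\Fb)$, with $\phi$ a normalized ground state of $P_{\kappa H\Fb}^{\Omega}$, against the normal-state energy $\frac{\kappa^2}{2}|\Omega|$, which lower-bounds every trivial configuration $(0,\Ab)$. The expansion $\mathcal E_{\kappa,H,B_{0}}(t\phi,\Fb)=\frac{\kappa^2}{2}|\Omega|+t^2\bigl(\mu_{1}(\kappa H\Fb)-\kappa^2\bigr)+\frac{\kappa^2}{2}t^4\|\phi\|_{L^4(\Omega)}^4$ is right, and the two auxiliary facts you flag (existence of the ground state via compact resolvent of the Neumann realization, and $\phi\in L^4$ by Sobolev embedding in dimension two) are exactly the ones needed.
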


Our main result proves that the condition $\mu_{1}(\kappa H\Fb)<\kappa^2$ is also necessary, provided $\kappa$ is sufficiently large.

\begin{thm}[Necessary condition for existence of non-trivial solutions]\label{thm:mu<k}
There exists $\kappa_0>0$ such that, for all $\kappa\geq \kappa_0$ and $H>0$, the following two properties are equivalent:
\begin{itemize}
\item[A.] There exists a solution $(\psi,\Ab)$ of \eqref{eq-2D-GLeq} such that $\psi\not\equiv0$\,.
\item[B.] $\mu_{1}(\kappa H\Fb)<\kappa^2$.
\end{itemize}
\end{thm}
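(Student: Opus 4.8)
The implication $B\Rightarrow A$ is precisely Proposition~\ref{prop:LP}, valid for all $\kappa,H>0$, so the whole content of Theorem~\ref{thm:mu<k} is the converse $A\Rightarrow B$, which I would establish in contrapositive form: assuming $\mu_{1}(\kappa H\Fb)\geq\kappa^{2}$, I would show that every solution $(\psi,\Ab)$ of \eqref{eq-2D-GLeq} satisfies $\psi\equiv0$. Throughout set $b=\kappa H$ and $t=\|\psi\|_{L^{2}(\Omega)}^{2}$, the goal being $t=0$.

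The first step is to record the standard a priori bounds for critical points. The maximum principle applied to $\eqref{eq-2D-GLeq}_{(a)}$ gives $\|\psi\|_{L^{\infty}(\Omega)}\leq1$. Multiplying $\eqref{eq-2D-GLeq}_{(a)}$ by $\overline{\psi}$, integrating over $\Omega$ and using the Neumann condition $\eqref{eq-2D-GLeq}_{(c)}$ yields the fundamental identity
\[
\int_{\Omega}|(\nabla-i\kappa H\Ab)\psi|^{2}\,dx=\kappa^{2}\int_{\Omega}(1-|\psi|^{2})|\psi|^{2}\,dx\,,
\]
so that, inserting $\psi$ as a trial state in the variational principle \eqref{def:mu1} for the operator $P_{\kappa H\Ab}^{\Omega}$, one obtains, whenever $\psi\not\equiv0$,
\[
\mu_{1}(\kappa H\Ab)\leq\kappa^{2}\Bigl(1-\tfrac{\|\psi\|_{L^{4}(\Omega)}^{4}}{t}\Bigr)<\kappa^{2}\,.
\]
From the second equation $\eqref{eq-2D-GLeq}_{(b)}$, the gauge conditions $\Div(\Ab-\Fb)=0$ and $(\Ab-\Fb)\cdot\nu=0$, and the boundary condition $\eqref{eq-2D-GLeq}_{(d)}$ (which forces $\curl(\Ab-\Fb)=0$ on $\partial\Omega$), the div--curl elliptic estimate bounds $\Ab-\Fb$ in $H^{2}(\Omega)\hookrightarrow L^{\infty}(\Omega)$ by the supercurrent, giving $\|\Ab-\Fb\|_{L^{\infty}(\Omega)}\leq C(\kappa H)^{-1}\|\IM(\overline{\psi}(\nabla-i\kappa H\Ab)\psi)\|_{L^{2}(\Omega)}\leq C\,t^{1/2}/H$.

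The decisive step is to compare $\mu_{1}(\kappa H\Fb)$ with $\mu_{1}(\kappa H\Ab)$ and confront the two displays above with the hypothesis $\mu_{1}(\kappa H\Fb)\geq\kappa^{2}$. Since, by Theorem~\ref{main-th} and the expansion of \cite{MJN}, $b\mapsto\mu_{1}(b\Fb)$ is eventually increasing and tends to $+\infty$, the assumption $\mu_{1}(\kappa H\Fb)\geq\kappa^{2}$ cannot be met for bounded $\kappa H$ and in fact forces $\kappa H\to+\infty$ as $\kappa\to+\infty$, uniformly in $H$; this places us in the semiclassical regime where Agmon-type exponential decay estimates apply. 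These localize the $L^{2}$-mass of $\psi$ in a shrinking tubular neighborhood of $\Gamma\cup\Sigma$ of vanishing measure $m(\kappa H)\to0$, so that $\|\psi\|_{\infty}\leq1$ gives $t\leq C\,m(\kappa H)$ while a reverse Hölder inequality on that tube sharpens the nonlinear gain to $\|\psi\|_{L^{4}(\Omega)}^{4}\geq c\,t^{2}/m(\kappa H)$. Writing $(\nabla-i\kappa H\Fb)\psi=(\nabla-i\kappa H\Ab)\psi+i\kappa H(\Ab-\Fb)\psi$ and expanding the quadratic form, the cross term is controlled through $\|\Ab-\Fb\|_{L^\infty(\Omega)}$ and the fundamental identity, which should produce $\mu_{1}(\kappa H\Fb)\leq\mu_{1}(\kappa H\Ab)+o\!\left(\kappa^{2}\|\psi\|_{L^{4}(\Omega)}^{4}/t\right)$ at the relevant scale; fed into $\mu_{1}(\kappa H\Fb)\geq\kappa^{2}>\mu_{1}(\kappa H\Ab)$ together with the sharpened gain, this yields $t=0$.

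The main obstacle is exactly this quantitative comparison of the two ground state energies at the spectral scale $\kappa^{2}$: the self-induced correction $\Ab-\Fb$ perturbs $\mu_{1}$ by an amount a priori of the same order as the nonlinear gain $\kappa^{2}\|\psi\|_{L^{4}(\Omega)}^{4}/t$, so no crude bound suffices, and solutions of very small mass are the dangerous case. Closing the estimate requires the concentration of $\psi$ near $\Gamma\cup\Sigma$ to improve \emph{simultaneously} the $L^{4}/L^{2}$ ratio and the control of the self-field $\Ab-\Fb$, and it requires all constants to be uniform in $H>0$ --- which is why the precise \cite{MJN} asymptotics, through the monotonicity of Theorem~\ref{main-th}, are genuinely needed rather than mere boundedness of $\mu_{1}(\kappa H\Fb)$. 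The most delicate sub-case is the near-threshold one, $\mu_{1}(\kappa H\Fb)\downarrow\kappa^{2}$, where the spectral gap degenerates and the two perturbative quantities become comparable.
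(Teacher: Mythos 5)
Your reduction is sound as far as it goes: $B\Rightarrow A$ is indeed Proposition~\ref{prop:LP}, and working in contrapositive legitimately confines you to the regime $H\gtrsim\kappa^3$ (by Theorem~\ref{thm.mu1.ful}, $\mu_1(\kappa H\Fb)\geq\kappa^2$ forces $\kappa H\geq c\,\kappa^4$), which collapses the paper's Cases~2 and~3 --- there the paper simply checks that $\mu_1(\kappa H\Fb)<\kappa^2$ holds unconditionally by a trial-state or leading-order-asymptotics argument. You also correctly identify the three ingredients: the identity $\kappa^2\|\psi\|_{L^4}^4=-\mathcal Q_{\kappa H\Ab}(\psi)+\kappa^2\|\psi\|_{L^2}^2$, an a priori bound on the self-field $\Ab-\Fb$, and concentration of $\psi$ near the zero set. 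But you then declare the quantitative comparison of $\mathcal Q_{\kappa H\Fb}$ with $\mathcal Q_{\kappa H\Ab}$ to be ``the main obstacle'' for which ``no crude bound suffices,'' and you leave it open. That comparison \emph{is} the theorem; what you have written is a plan with the central estimate missing, so there is a genuine gap.

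The paper closes it with two specific devices you do not supply. First, Proposition~\ref{prop:est-psi-var}: $\|\psi\|_{L^2(\Omega)}^2\leq C\kappa^{-\alpha/2}\|\psi\|_{L^4(\Omega)}^2$, proved not by Agmon estimates on the linear operator but by an energy/cut-off argument using equation \eqref{eq-2D-GLeq}$_{(a)}$ itself together with the lower bound $|B_0|\gtrsim\ell^2$ on $D_\ell$, $\ell=\kappa^{-\alpha}$. Second, the splitting $\mathcal Q_{\kappa H\Ab}(\psi)\geq(1-\delta)\mathcal Q_{\kappa H\Fb}(\psi)+(1-\delta^{-1})(\kappa H)^2\|(\Ab-\Fb)\psi\|_{L^2}^2$ combined with the a priori bound \eqref{5d-<}, $\|(\Ab-\Fb)\psi\|_{L^2}\leq C H^{-1}\|\psi\|_{L^4}^2\|\psi\|_{L^2}$ --- note the extra factor $\|\psi\|_{L^4}^2$ relative to your $L^\infty$ bound $\|\Ab-\Fb\|_\infty\leq Ct^{1/2}/H$, which, as you yourself compute, produces an error of the \emph{same} order $\kappa^2t^2$ as the gain and therefore cannot work. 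With the sharper bound and the choice $\delta=\|\psi\|_{L^4}^2$, both error terms become $C\kappa^{2-\alpha/2}\|\psi\|_{L^4}^4$, and one lands directly on $\bigl(\mu_1(\kappa H\Fb)-\kappa^2\bigr)\|\psi\|_{L^2}^2\leq\kappa^2\|\psi\|_{L^4}^4\bigl(C\kappa^{-\alpha/2}-1\bigr)<0$. In particular your worry about the ``near-threshold'' case where the spectral gap degenerates is a red herring: no gap is needed, since the conclusion is a strict sign, not a quantitative separation. Until you prove the localization estimate and carry out this balancing, the argument does not establish the theorem.
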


For non sign-changing smooth magnetic fields and smooth domains, Theorem~\ref{thm:mu<k} was proved by Fournais-Helffer \cite{FH1}. For uniform magnetic fields and non-smooth domains, it was proved by \cite{BF}.

\subsection*{Discussion}

Combining Theorems~\ref{main-th} and Theorem~\ref{thm:mu<k}, we know that   the  transition from trivial to non-trivial solution occurs at a unique threshold $H_{C_3}(\kappa)$.  More precisely, 
\[\exists\,\kappa_0>0,~\forall\,\kappa\geq\kappa_0,~\exists\,H_{C_3}(\kappa)>0\]
such that
\begin{itemize}
\item $H=H_{C_3}(\kappa)$ is the unique solution of $\mu_1(\kappa H)=\kappa^2$\,;
\item For $H<H_{C_3}(\kappa)$, every minimizer of $\mathcal E_{\kappa,H,B_{0}}$ is non-trivial (i.e. the order parameter does not vanish everywhere)\,;
\item For $H\geq H_{C_3}(\kappa)$, every solution $(\psi,\Ab)$  of \eqref{eq-2D-GLeq} is trivial (i.e. $\psi\equiv0$)\,.
\end{itemize}

The existing spectral asymptotics \cite{MJN} for the eigenvalue $\mu_1(b\Fb)$ also yield a complete asymptotic expansion of $H_{C_3}(\kappa)$; in fact (see Theorem~\ref{thm.mu1.ful})
\[ H_{C_3}(\kappa)\sim\kappa^3\sup_{x\in  \Sigma}\left(\frac{1}{|\lambda^{\rm Hess}_{2}(x)|\left [\mathcal{X}_{1}\Big(\varepsilon(x)\Big)\right ]^2}\right )+ \sum\limits_{j=1}^{+\infty} c_j\kappa^{3-j}\qquad (\kappa\to+\infty)\,.\]
Here
\begin{equation}\label{eps}
	\varepsilon(x)=\sqrt{\frac{|\lambda^{\rm Hess}_{1}(x)|}{|\lambda^{\rm Hess}_{2}(x)|}},
\end{equation}
and for $\varepsilon>0$,  $ \mathcal{X}_{1}(\varepsilon) $ denotes the first eigenvalue of the following operator
\begin{equation}\label{Model.op}
	\mathcal{J}^{\varepsilon}:= D_{t}^2 +\left (D_s- \left (\frac{t^3}{3}-\varepsilon^2 s^2 t\right ) \right  )^2\,, \qquad \text{in}~L^2(\R^2)\,.
\end{equation}

Furthermore,  for $\kappa\geq\kappa_0$ and $H<H_{C_3}(\kappa)$, the local minimizers of the functional $\mathcal E_{\kappa,H,B_{0}}$ are necessarily non-trivial. In  fact, $(0,\Fb)$ is a local minimizer (locally stable) provided that 
the second variation of the  $\mathcal E$ is positive,
$$
{\rm Hess}_{(0,\Fb)} >0\,,
$$
which is equivalent to the property $\mu_1(\kappa H\Fb)> \kappa^2$. 

\subsection*{Oscillations and Little-Parks effect} 

The case studied in this paper is consistent with the generically observed  monotone transition between superconduting and normal phases. Equally interesting are  non-generic cases where the sample oscillates between the superconducting and normal phases before setting definitely to the normal state (Little-Parks effect \cite{LiPa},  see Fig.~\ref{domain}). Breaking the monotonic transition is viable by  topological obstructions related to  the domain or the magnetic field. We refer to \cite{FP, KHSU, HK2, KHPA, KHPA2} for settings with oscillations and to \cite{FH1, BF, FK1, WA} for generic monotone settings. 

\begin{figure}[ht!]
	\begin{center}
		\includegraphics[scale=1.7]{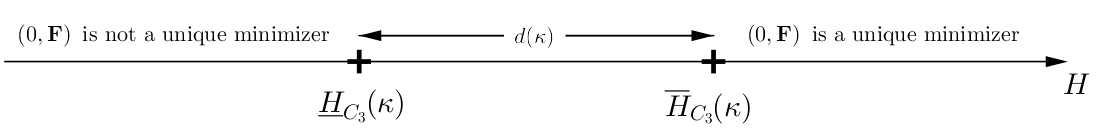} 
		\caption{Schematic representation of a sample consistent with the Little-Parks effect: Between $\overline{H}_{C_3}(\kappa)$ and $\underline{H}_{C_3}(\kappa)$, the sample oscillates between normal and superconducting phases.}
		\label{domain}
	\end{center}
\end{figure} 

\subsection*{Organization of the paper}
The rest of the paper consists of three sections: Sec.~\ref{sec:sd} is devoted to the analysis of the monotonicity of $ b \to  \mu_{1}(b\Fb)  $; Sec.~\ref{GP} establishes the breakdown of superconductivity under Assumption \ref{B(x)}; finally, Sec.~\ref{CBS} is devoted to the proof of Theorem~\ref{thm:mu<k} on the superconducting-normal phase transition.

\section{Strong diamagnetism}\label{sec:sd}

In this section we prove Theorem~\ref{main-th} regarding the  monotonicity of the function $ b  \mapsto  \mu_{1}(b \Fb)  $. We follow the approach of Fournais-Helffer \cite{FH} whose ingredients are
\begin{itemize}
\item The full asymptotic expansion for the lowest eigenvalue\,;
\item Proposition~\ref{pro.fh} below.
\end{itemize} 

The full asymptotic expansion for $\mu_1(b\Fb)$ was obtained in \cite{MJN}; we recall this result below.

	\begin{theorem}[Dauge-Miqueu-Raymond \cite{MJN}]\label{thm.mu1.ful}
		Let $\Omega\subset\R^2$ be an open bounded set with a smooth boundary. Under Assumption \ref{B(x)}. There exists a sequence $ \{\alpha_j \}_{j=0}^{\infty} \subset \R^+$ such that for all $M \geq 0$,	
		\begin{equation}\label{ful.exp.mu1}
			\mu_{1}(b \Fb)=b^{\frac{1}{2}} \sum_{j=0}^{M} \alpha_j  b^{\frac{-j}{4}} +\mathcal{O}\left (b^{\frac{1-M}{4}}\right )	\,.
		\end{equation}
	Here, \begin{equation}\label{m_eig}
		\dsp \alpha_0= \Lambda_{1}=\min_{x\in \Sigma }|\lambda^{\rm Hess}_{2}(x)|^{1/2} \mathcal{X}_{1}\Big(\varepsilon(x)\Big)\,,
	\end{equation}
$\varepsilon(x)  $ and $  \mathcal{X}_{1} $  are introduced in \eqref{eps} and \eqref{Model.op} respectively.
\end{theorem}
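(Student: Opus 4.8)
The plan is to treat the large-field limit $b\to+\infty$ as a semiclassical problem with small parameter $h=b^{-1}$ and to show that the ground state concentrates at the self-intersection set $\Sigma$, where the magnetic field vanishes to second order. First I would compare the three competing energy scales dictated by the order of vanishing of $B_0$. On any region staying away from $\Gamma$ the field does not vanish, so the standard lower bound $\mu_1\gtrsim b\inf|B_0|$ gives a local energy of order $b$. Along $\Gamma\setminus\Sigma$, where $B_0$ vanishes linearly by $\eqref{B(x)}_{(a)}$ and $\eqref{B(x)}_{(c)}$, the relevant Montgomery-type model yields a local energy of order $b^{2/3}$, and the boundary zeros allowed by $\eqref{B(x)}_{(b)}$ likewise contribute at a rate strictly larger than $b^{1/2}$. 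Finally, at the points of $\Sigma$, where $B_0$ vanishes quadratically with a nondegenerate saddle Hessian, a dimensional balance between the kinetic term $\sim\ell^{-2}$ and the magnetic term $\sim(b\ell^{3})^{2}$ forces the length scale $\ell\sim b^{-1/4}$ and the energy scale $b^{1/2}$. Since $b^{1/2}\ll b^{2/3}\ll b$, the infimum is attained near $\Sigma$; I would make this rigorous with Agmon estimates, using a partition of unity subordinate to the three regions and an IMS-type localization formula to absorb the cutoff errors, thereby proving that the ground state is exponentially small outside any fixed neighborhood of $\Sigma$ after the $b^{-1/4}$ rescaling.

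Next, near a fixed point $x_0\in\Sigma$ I would choose adapted coordinates in which the quadratic part of $B_0$ is diagonalized, $B_0\approx\tfrac12(\lambda^{\rm Hess}_{1}(x_0)\,s^2+\lambda^{\rm Hess}_{2}(x_0)\,t^2)$ with opposite signs by $\eqref{B(x)}_{(d)}$, together with a gauge in which the corresponding vector potential is cubic. Rescaling on the scale $b^{-1/4}$ and factoring out the energy scale, the operator becomes, to leading order, a positive multiple of the model operator $\mathcal{J}^{\varepsilon(x_0)}$ of \eqref{Model.op}, with $\varepsilon(x_0)$ given by \eqref{eps}; the remainder is an expansion in integer powers of $b^{-1/4}$ coming from the higher Taylor coefficients of $B_0$ and from the curvature of the coordinate change. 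Restoring the scales, the ground-state energy of this leading operator reproduces the constant $\alpha_0=\min_{x\in\Sigma}|\lambda^{\rm Hess}_{2}(x)|^{1/2}\mathcal{X}_1(\varepsilon(x))$ of \eqref{m_eig}, and the calculation explains why the full expansion proceeds in powers of $b^{-1/4}$.

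For the upper bound I would insert a quasimode $\chi(b^{1/4}(x-x_*))\,u_0(b^{1/4}(x-x_*))$, where $x_*$ realizes the minimum in \eqref{m_eig}, $u_0$ is the (exponentially decaying) ground state of $\mathcal{J}^{\varepsilon(x_*)}$, and $\chi$ is a fixed cutoff; iterating with correction terms $b^{-j/4}u_j$ and solving the resulting hierarchy of inhomogeneous equations $(\mathcal{J}^{\varepsilon(x_*)}-\mathcal{X}_1)u_j=(\text{lower order data})$ yields the full asymptotic series $b^{1/2}\sum_j\alpha_j b^{-j/4}$. Each such equation is solvable precisely because $\mathcal{X}_1(\varepsilon(x_*))$ is a simple isolated eigenvalue, so the right-hand side may be projected onto the range of $\mathcal{J}^{\varepsilon(x_*)}-\mathcal{X}_1$, and the coefficients $\alpha_j$ are then fixed by the Fredholm solvability conditions. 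The matching lower bound would follow by combining the Agmon localization with the rescaled comparison: on a neighborhood of $\Sigma$ the quadratic form is bounded below by $b^{1/2}(\mathcal{J}^{\varepsilon}-Cb^{-1/4})$ up to controlled errors, and a Born--Oppenheimer/effective-operator reduction upgrades this to the full expansion order by order.

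The main obstacle is the spectral analysis of the genuinely two-dimensional model operator $\mathcal{J}^{\varepsilon}$. Unlike the one-dimensional de Gennes operator of the boundary problem, one must establish that $\mathcal{J}^{\varepsilon}$ has compact resolvent, the confinement coming from the growth of the cubic magnetic term rather than from a potential well, that its ground-state energy $\mathcal{X}_1(\varepsilon)$ is simple with a positive spectral gap, and that the associated eigenfunction decays fast enough to support the quasimode construction; the continuous dependence of $\mathcal{X}_1$ on $\varepsilon$, needed to make sense of the minimum over $\Sigma$, must also be controlled. A secondary difficulty is ensuring that the transversal curve segments $\Gamma\setminus\Sigma$ meeting at each saddle, singled out by the two distinct tangent directions in $\eqref{B(x)}_{(d)}$, do not produce interaction terms spoiling the separation of scales; assumption $\eqref{B(x)}_{(d)}$ is exactly what guarantees a genuine saddle and hence the $b^{1/2}$ rate at $\Sigma$.
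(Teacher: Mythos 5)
The paper offers no proof of Theorem~\ref{thm.mu1.ful} to compare against: the statement is recalled verbatim from Dauge--Miqueu--Raymond \cite{MJN} and is used in the present article purely as a black box (its only role here is to feed the monotonicity argument of Section~\ref{sec:sd} and the eigenvalue bounds in Sections~\ref{GP} and~\ref{CBS}). Judged against the cited source rather than against the paper, your outline reproduces the correct strategy: the hierarchy of local energy scales $b$, $b^{2/3}$, $b^{1/2}$ according to the order of vanishing of $B_0$ (interior non-vanishing region, the Montgomery regime along $\Gamma\setminus\Sigma$ and at the boundary crossings, and the quadratic saddles of $\Sigma$), localization of the ground state near $\Sigma$, the $b^{-1/4}$ rescaling producing the model operator $\mathcal{J}^{\varepsilon}$ of \eqref{Model.op} with $\varepsilon$ as in \eqref{eps}, and a quasimode hierarchy generating the series in powers of $b^{-1/4}$ with leading constant \eqref{m_eig}. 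However, what you have written is a program rather than a proof: every genuinely difficult step is named and then deferred. In particular, the spectral theory of $\mathcal{J}^{\varepsilon}$ (compact resolvent, simplicity and isolation of $\mathcal{X}_1(\varepsilon)$, decay of the ground state, regularity in $\varepsilon$), the precise form of the Agmon estimates adapted to a field that vanishes to second order, and above all the lower bound matching the quasimode expansion to all orders --- which in \cite{MJN} rests on a careful Grushin/Feshbach-type reduction and not on the one-line ``Born--Oppenheimer upgrade'' you invoke --- are all left as acknowledged obstacles. So the plan is sound and consistent with the reference, but it does not constitute an independent proof of the theorem, and for the purposes of this paper none is needed.
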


Recall the definition of the left and right derivatives of $ \mu_{1}( b \Fb)$, 
$$ \frac{d}{d b }\mu^{\pm}_{1}( b \Fb)=\lim_{\varepsilon\to 0^{\pm} } \frac{\mu_{1}( (b+\varepsilon) \Fb)-\mu_{1}( b \Fb)}{\varepsilon}\,,$$
whose  existence  is guaranteed by the analytic perturbation theory \cite{TK}, which also gives for all $  b >0 $ 
\begin{equation}\label{com.Le.Ri}
	\frac{d}{d b }\mu_1^{+}( b \Fb)\leq \frac{d}{d b }\mu_1^{-}( b \Fb)\,.
\end{equation}
We introduce the new parameter $\beta=\sqrt{b}$ and the following function
\begin{equation}\label{eq:new-ev}
\lambda(\beta)=\mu_1( \beta^2 \Fb)\,.
\end{equation}
Recall the following proposition \cite[Page 27]{FH} 
\begin{prop}\label{pro.fh}
Let $ g $ be a function such that for all $\varepsilon \in (-1,\, 1)$ we have
\begin{equation}\label{cond.g}
	|g(\beta+\varepsilon)-g(\beta)|\to 0\,,\qquad \text{as}~ \beta\to +\infty\,.
\end{equation}
Suppose that there exists $\alpha\in \R$ such that 
\[
	\lambda( \beta)=\alpha \beta+g(\beta)+o(1)\,,\qquad \text{as}~ \beta\to +\infty\,.
\]
Then the limits $ \lim_{\beta\to+ \infty}	\frac{d}{d \beta }\lambda(\beta) $ and $ \lim_{\beta\to +\infty}	\frac{d}{d \beta }\lambda(\beta \Fb) $ exist and
\[
	\lim_{\beta\to +\infty}	\frac{d}{d \beta }\lambda^+(\beta) =\lim_{\beta\to +\infty}	\frac{d}{d \beta }\lambda^-(\beta)=\alpha\,. 
\]
\end{prop}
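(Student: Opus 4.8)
The plan is to exploit the two hypotheses—the asymptotic decomposition $\lambda(\beta)=\alpha\beta+g(\beta)+o(1)$ together with the slow-oscillation condition \eqref{cond.g}—to pin down the derivative as $\beta\to+\infty$. The key observation is that the difference quotient of $\lambda$ decomposes, via linearity of the difference operator, into the difference quotient of the linear part $\alpha\beta$ (which is exactly $\alpha$), plus the difference quotient of $g$ (which the hypothesis \eqref{cond.g} forces to be small), plus an error coming from the $o(1)$ term. First I would fix a small increment $\varepsilon\in(-1,1)$ and write the finite difference
\[
\lambda(\beta+\varepsilon)-\lambda(\beta)=\alpha\varepsilon+\bigl(g(\beta+\varepsilon)-g(\beta)\bigr)+\bigl(o_\beta(1)-o_{\beta+\varepsilon}(1)\bigr)\,,
\]
so that dividing by $\varepsilon$ and letting $\beta\to+\infty$ kills the $g$-difference by \eqref{cond.g} and kills the $o(1)$ remainders, leaving $\alpha$.

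The main obstacle is that Proposition~\ref{pro.fh} asserts a statement about the one-sided \emph{derivatives} $\frac{d}{d\beta}\lambda^{\pm}(\beta)$ evaluated pointwise, whereas the crude finite-difference argument above only controls the \emph{difference quotient} uniformly over a window of increments. The gap is that an oscillating function can have a well-behaved difference quotient over fixed windows while its pointwise derivative oscillates wildly. This is precisely why the monotonicity/convexity structure must be used. The resolution, following Fournais-Helffer \cite{FH}, is to note that $b\mapsto\mu_1(b\Fb)$ is a concave function of $b$ (it is an infimum of the affine-in-$b^2$-type quadratic forms—more carefully, the ground state energy is concave as an infimum of linear functions of the relevant parameter), hence $\lambda$ has a monotone structure in its one-sided derivatives, and in particular \eqref{com.Le.Ri} holds. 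For a function with monotone one-sided derivatives, control of the difference quotients over shrinking-but-nontrivial windows does upgrade to control of the pointwise derivatives.

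Concretely, I would proceed as follows. Using concavity, the right derivative $\frac{d}{d\beta}\lambda^+(\beta)$ is nonincreasing and lies below the left derivative; moreover for any $\varepsilon>0$ the monotonicity of one-sided derivatives gives the sandwich
\[
\frac{\lambda(\beta+\varepsilon)-\lambda(\beta)}{\varepsilon}\ \le\ \frac{d}{d\beta}\lambda^+(\beta)\ \le\ \frac{d}{d\beta}\lambda^-(\beta)\ \le\ \frac{\lambda(\beta)-\lambda(\beta-\varepsilon)}{\varepsilon}\,,
\]
so both one-sided derivatives are trapped between two difference quotients of the type analyzed above. Taking $\beta\to+\infty$ first, both outer difference quotients tend to $\alpha$ by the decomposition and \eqref{cond.g}, forcing $\limsup$ and $\liminf$ of each one-sided derivative to coincide at $\alpha$. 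This simultaneously proves existence of the limits and identifies the common value, completing the argument.

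The one step I expect to require the most care is the passage from finite-difference control to the derivative limit: one must take the limit $\beta\to+\infty$ before sending $\varepsilon\to0$, and verify that the $o(1)$ remainder terms are genuinely uniform enough (over the bounded range of increments $\varepsilon$) that they vanish in this order of limits. The slow-oscillation hypothesis \eqref{cond.g} is exactly the ingredient that makes the $g$-contribution uniformly negligible across the window, and the concavity furnishes the monotone sandwich that turns window-control into pointwise-derivative-control.
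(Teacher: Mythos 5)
Your strategy (finite-difference decomposition plus a concavity sandwich, taking $\beta\to+\infty$ before $\varepsilon\to 0$) is the right one, and you are correct that the difference-quotient computation alone cannot control the pointwise one-sided derivatives. Note first that the paper does not prove this proposition at all --- it is quoted from Fournais--Helffer \cite{FH} --- so the comparison is with that argument. There, the structural input is \emph{not} that $b\mapsto\mu_1(b\Fb)$ is concave. For fixed $\phi$ the Rayleigh quotient
\[
\frac{\mathcal Q^{\Omega}_{b\Fb}(\phi)}{\|\phi\|^2_{L^2(\Omega)}}
=\frac{\|\nabla\phi\|^2}{\|\phi\|^2}-\frac{2b}{\|\phi\|^2}\,\IM\int_\Omega\overline\phi\,\Fb\cdot\nabla\phi\,dx+\frac{b^2\|\,|\Fb|\,\phi\|^2}{\|\phi\|^2}
\]
is a \emph{convex} quadratic in $b$ (positive leading coefficient), not a linear function of $b$, so $\mu_1(b\Fb)$ is an infimum of convex functions and is not concave in general. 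The correct statement is that $b\mapsto\mu_1(b\Fb)-b^2\sup_\Omega|\Fb|^2$ is concave, being an infimum of concave quadratics; with that correction your sandwich picks up an $O(\varepsilon)$ error, which would indeed be harmless in your order of limits. Up to here the issue is a fixable misstatement.

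The genuine gap is the change of variables. Here $\lambda(\beta)=\mu_1(\beta^2\Fb)$, so the variable in which the expansion is linear ($\beta=\sqrt b$) is not the variable in which the corrected eigenvalue is concave ($b$). In the $\beta$ variable the natural subtraction is $\sup_\Omega|\Fb|^2\,\beta^4$, whose second derivative grows like $\beta^2$ (and one checks that $\beta\mapsto\mathcal Q^\Omega_{\beta^2\Fb}(\phi)/\|\phi\|^2-C\beta^4$ is not concave uniformly in $\phi$, since the coefficient of $\beta^2$ has no definite sign). Transporting the concavity inequality to $\beta$ therefore yields a sandwich with errors of size $O(\beta^2\varepsilon)$, which blow up when $\beta\to+\infty$ is taken at fixed $\varepsilon$ --- precisely the step you flag as delicate, and the one your write-up does not close. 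A correct argument runs the sandwich in the $b$ variable with a $b$-dependent increment, say $\eta=\epsilon_0 b^{-1/2}$: the concavity error $\sup_\Omega|\Fb|^2\,\eta$ is then $O(\epsilon_0 b^{-1/2})$, while the difference quotient equals $\tfrac{\alpha_0}{2}b^{-1/2}+o(b^{-1/2})$ \emph{provided} the remainder in \eqref{ful.exp.mu1} is $o(b^{-1})$, i.e.\ provided Theorem~\ref{thm.mu1.ful} is invoked with $M\geq 6$ rather than a two- or four-term expansion. Combined with \eqref{com.Le.Ri} and letting $\epsilon_0\to0$ last, this gives $b^{1/2}\frac{d}{db}\mu_1^{\pm}(b\Fb)\to\tfrac{\alpha_0}{2}$, which is the assertion actually used. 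As written, your proof asserts the key concavity for the wrong function in the wrong variable, and the stated proposition's conclusion does not follow from its stated hypotheses without this additional structural work.
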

\begin{proof}[Proof of Theorem~\ref{main-th}]
Choose $M=3$ in Theorem~\ref{thm.mu1.ful}. The proof of Theorem \ref{main-th} simply follows by rewriting \eqref{ful.exp.mu1} in the following form (recall that $\beta=\sqrt{b}$),
\[\lambda(\beta)=\alpha_0\beta+g(\beta)+ o(1)\,,\]
where $ g(\beta)=\beta \sum_{j=1}^{3} \alpha_j  \beta^{\frac{-j}{2}} $ and $\lambda(\beta)$ is introduced in \eqref{eq:new-ev}. For  $\varepsilon \in (-1,\, 1)$ fixed, we have
\[
g(\beta+\varepsilon)-g(\beta)\underset{\beta\to+\infty}{\sim}\frac{\alpha_1\varepsilon}{(\beta+\varepsilon)^{\frac{1}{2}} +\beta^{\frac{1}{2}} }\]
which implies that \eqref{cond.g} holds.
We can use Proposition \ref{pro.fh} in order to prove the monotonicity of $ \mu_{1}(b \Fb) $. In fact,
\[
\lim_{\beta\to +\infty}	\frac{d}{d \beta }\mu^+_{1}(\beta^2 \Fb) =\lim_{\beta\to +\infty}	\frac{d}{d \beta } \mu^{-}_{1}(\beta^2 \Fb)=\alpha_0\,. 
\]
To finish the proof, we write by the chain rule (recall that $\beta=\sqrt{b}$),
$$ 
		\lim_{b\to +\infty}	b^{\frac{1}{2}}\frac{d}{d b }\mu_1^{+}( b \Fb) =\lim_{b\to +\infty}	b^{\frac{1}{2}}\frac{d}{d b }\mu_1^{-}( b \Fb)=\frac{1}{2}\alpha_0\,. 
$$
\end{proof}

\section{Breakdown of superconductivity}\label{GP}
In this section we extend a result of Giorgi-Phillips \cite{GP} to the case where the zero set of $ B_0 $ is self-intersecting. We will show that the trivial solution $ (0,\Fb) $ is a global minimizer, when $H/\kappa^3$ is large. We first recall {\it a priori} estimates for a critical point  $(\psi, \Ab)$ of the G.L.functional (see \cite[Theorem 8.1]{KA3}).

\begin{thm}\label{thm-2D-apriori}
There exist positive constants $C_{1}$ and $C_{2}$ such that, if $\kappa>0$, $H>0$ and $(\psi,\Ab)$ is a solution of \eqref{eq-2D-GLeq}, then,
\begin{align}
\|(\nabla-i\kappa H\Ab)\psi\|_{L^{2}(\Omega)}&\leq \,\kappa\, \|\psi\|_{L^{2}(\Omega)}\label{2nd-<}\\
\|(\Ab-\Fb)\psi\|_{L^{2}(\Omega)}& \leq \frac{C_{1}}{H}\|\psi\|_{L^{4}(\Omega)}^{2}\|\psi\|_{L^{2}(\Omega)}\label{5d-<}\\
\|(\nabla-i\kappa H\Fb)\psi\|_{L^{2}(\Omega)}&\leq C_{2}\,\kappa\,\|\psi\|_{L^{2}(\Omega)}\label{8d-<}\,.
\end{align}

\end{thm}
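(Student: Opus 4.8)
The plan is to establish the three bounds in sequence, treating \eqref{2nd-<} as the base estimate, deriving \eqref{5d-<} from it by testing the second Ginzburg--Landau equation against $\Ab-\Fb$, and finally obtaining \eqref{8d-<} by comparing the two magnetic gradients. Throughout I would rely on three standard ingredients: the maximum-principle bound $\|\psi\|_{L^\infty(\Omega)}\le 1$ for solutions of \eqref{eq-2D-GLeq} (proved independently, e.g. by testing the real part of equation $(a)$ against $(|\psi|^2-1)_+\psi$, equivalently by the weak maximum principle on $\{|\psi|>1\}$); the two-dimensional Sobolev embedding $H^1(\Omega)\hookrightarrow L^4(\Omega)$; and the div--curl elliptic estimate
\[
\|a\|_{H^1(\Omega)}\le C\,\|\curl a\|_{L^2(\Omega)}\qquad\text{for all }a\in H^1_{\Div}(\Omega),
\]
which holds because $\Omega$ is simply connected with smooth boundary, so no harmonic fields obstruct it.

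For \eqref{2nd-<} I would multiply equation $(a)$ by $\overline{\psi}$ and integrate over $\Omega$; the boundary condition $(c)$ makes the boundary term vanish, yielding the energy identity
\[
\|(\nabla-i\kappa H\Ab)\psi\|_{L^2(\Omega)}^2=\kappa^2\int_\Omega(1-|\psi|^2)|\psi|^2\,dx\le\kappa^2\int_\Omega|\psi|^2\,dx,
\]
where the inequality uses the pointwise bound $(1-|\psi|^2)|\psi|^2\le|\psi|^2$. Taking square roots gives \eqref{2nd-<}; note that this step needs no information about $\Ab$ beyond the boundary condition.

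For \eqref{5d-<} set $a=\Ab-\Fb\in H^1_{\Div}(\Omega)$ and observe that $\curl a=0$ on $\partial\Omega$ by condition $(d)$. Testing equation $(b)$ against $a$ and integrating by parts, the boundary term drops out, leaving
\[
\|\curl a\|_{L^2(\Omega)}^2=\left|\frac{1}{\kappa H}\int_\Omega\IM\big(\overline{\psi}\,(\nabla-i\kappa H\Ab)\psi\big)\cdot a\,dx\right|.
\]
Bounding the integrand pointwise by $|\psi|\,|(\nabla-i\kappa H\Ab)\psi|\,|a|$ and applying Hölder with exponents $(4,2,4)$, then absorbing $\|a\|_{L^4}\le C\|a\|_{H^1}\le C\|\curl a\|_{L^2}$ via Sobolev and div--curl, I obtain
\[
\|\curl a\|_{L^2(\Omega)}\le\frac{C}{\kappa H}\,\|\psi\|_{L^4(\Omega)}\,\|(\nabla-i\kappa H\Ab)\psi\|_{L^2(\Omega)}.
\]
Inserting \eqref{2nd-<} cancels the factor $\kappa$, and a final Hölder step $\|a\psi\|_{L^2}\le\|a\|_{L^4}\|\psi\|_{L^4}\le C\|\curl a\|_{L^2}\|\psi\|_{L^4}$ produces \eqref{5d-<}.

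Finally \eqref{8d-<} follows by writing $(\nabla-i\kappa H\Fb)\psi=(\nabla-i\kappa H\Ab)\psi+i\kappa H\,a\psi$ and using the triangle inequality together with \eqref{2nd-<} and \eqref{5d-<}: the prefactor $\kappa H$ cancels the $H^{-1}$ in \eqref{5d-<}, and the surviving quantity $\|\psi\|_{L^4}^2\,\|\psi\|_{L^2}$ is controlled by $\|\psi\|_{L^2}^2\le|\Omega|^{1/2}\|\psi\|_{L^2}$ using $\|\psi\|_{L^\infty}\le1$ and $|\Omega|<\infty$. I expect the main obstacle to lie in \eqref{5d-<}: the bootstrapping there hinges on the div--curl estimate (hence on the topology of $\Omega$) and on checking that the boundary term genuinely vanishes when testing equation $(b)$, while the $L^\infty$ control of $\psi$ needed to close \eqref{8d-<} has to be secured separately before the argument can be completed.
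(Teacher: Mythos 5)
Your argument is correct and is essentially the standard proof of these a priori estimates; the paper itself does not reprove the theorem but simply quotes it from \cite{KA3} (Theorem 8.1 there), whose proof proceeds exactly as you outline: the energy identity from equation $(a)$ for \eqref{2nd-<}, testing equation $(b)$ against $\Ab-\Fb$ combined with the div--curl inequality and the Sobolev embedding $H^1\hookrightarrow L^4$ for \eqref{5d-<}, and the triangle inequality plus the maximum-principle bound $\|\psi\|_{L^\infty}\le 1$ for \eqref{8d-<}. The two points you flag as needing separate verification (the vanishing of the boundary term via condition $(d)$, and the independent proof of $\|\psi\|_{L^\infty}\le 1$) are indeed the only delicate steps, and both go through as you describe.
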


In the next theorem, we prove that $ (0, \Fb) $ is the unique minimizer of the functional when $ H/\kappa^3 $ is sufficiently large.

\begin{theorem}\label{thm:GP}
	 Let $\Omega\subset\R^2$ be a smooth, bounded, simply-connected open set. There exist positive constants $\bar C$ and $\kappa_{0}$, such that, if 
	 $$
	 H\geq \bar C \kappa^3\,,\qquad\kappa\geq\kappa_{0}\,,
	 $$
	then $(0,\Fb)$ is the unique solution to \eqref{eq-2D-GLeq}.
\end{theorem}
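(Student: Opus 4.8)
The plan is to argue by contradiction: assume $(\psi,\Ab)$ solves \eqref{eq-2D-GLeq} with $\psi\not\equiv0$ while $H\geq\bar C\kappa^3$ and $\kappa\geq\kappa_0$, and derive a contradiction by playing the growth of the ground state energy $\mu_1(\kappa H\Fb)$ against the \emph{a priori} bounds of Theorem~\ref{thm-2D-apriori}. The heuristic is that the spectral lower bound forces $\mu_1(\kappa H\Fb)$ to grow like $(\kappa H)^{1/2}$, whereas the equation constrains the $\Fb$-magnetic energy carried by $\psi$ to be at most of order $\kappa^2\|\psi\|_{L^2}^2$; these two requirements are compatible only when $H\lesssim\kappa^3$.

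Concretely, since $\psi\not\equiv0$, the variational characterization \eqref{def:mu1} applied to the test function $\psi$ gives
\[
\mu_1(\kappa H\Fb)\,\|\psi\|_{L^2(\Omega)}^2\leq \mathcal{Q}_{\kappa H\Fb}^{\Omega}(\psi)=\|(\nabla-i\kappa H\Fb)\psi\|_{L^2(\Omega)}^2\,.
\]
Combining this with the \emph{a priori} estimate \eqref{8d-<} yields the upper bound
\[
\mu_1(\kappa H\Fb)\leq C_2^2\,\kappa^2\,.
\]
On the other hand, writing $b=\kappa H$ and invoking the full expansion \eqref{ful.exp.mu1} of Theorem~\ref{thm.mu1.ful} with $M=0$, there exists $b_1>0$ such that $\mu_1(b\Fb)\geq \tfrac{\alpha_0}{2}\,b^{1/2}$ for all $b\geq b_1$, where $\alpha_0=\Lambda_1>0$ by \eqref{m_eig}. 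Since $H\geq\bar C\kappa^3$ forces $b=\kappa H\geq\bar C\kappa^4\to+\infty$, taking $\kappa_0$ large enough guarantees $b\geq b_1$, so that
\[
\tfrac{\alpha_0}{2}(\kappa H)^{1/2}\leq \mu_1(\kappa H\Fb)\leq C_2^2\kappa^2\,.
\]
Rearranging gives $H\leq (4C_2^4/\alpha_0^2)\,\kappa^3$. Choosing $\bar C>4C_2^4/\alpha_0^2$ then contradicts the hypothesis $H\geq\bar C\kappa^3$, whence $\psi\equiv0$.

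It remains to identify the magnetic potential. With $\psi\equiv0$, the right-hand side of the second equation in \eqref{eq-2D-GLeq} vanishes, so $\nabla^{\bot}\curl(\Ab-\Fb)=0$ and $\curl(\Ab-\Fb)$ is constant on $\Omega$; the fourth line of \eqref{eq-2D-GLeq} forces this constant to be zero, hence $\curl(\Ab-\Fb)\equiv0$. Since $\Ab-\Fb\in H^1_{\Div}(\Omega)$ and $\Omega$ is simply connected, the standard uniqueness for the system $\Div(\Ab-\Fb)=0$, $\curl(\Ab-\Fb)=0$, $(\Ab-\Fb)\cdot\nu=0$ yields $\Ab=\Fb$, so the only solution is $(0,\Fb)$.

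The delicate point is the quantitative matching of exponents: one must ensure that the polynomial bound $C_2^2\kappa^2$ from the \emph{a priori} estimate and the $(\kappa H)^{1/2}$ lower growth of $\mu_1(\kappa H\Fb)$ are both uniform in $H$ throughout the regime $H\geq\bar C\kappa^3$. The constant $C_2$ in \eqref{8d-<} is independent of $\kappa,H$, and the threshold $b_1$ together with $\alpha_0$ depend only on $\Omega$ and $B_0$; securing this uniformity, rather than any single estimate, is the heart of the argument.
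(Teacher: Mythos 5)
Your proposal is correct and follows essentially the same route as the paper: an upper bound $\mu_1(\kappa H\Fb)\lesssim\kappa^2$ from the min-max principle combined with the \emph{a priori} estimates of Theorem~\ref{thm-2D-apriori}, played against the lower bound $\mu_1(\kappa H\Fb)\gtrsim\Lambda_1(\kappa H)^{1/2}$ from Theorem~\ref{thm.mu1.ful}, which forces $H\lesssim\kappa^3$. Your use of \eqref{8d-<} (rather than the paper's appeal to \eqref{2nd-<}, whose displayed inequality silently replaces $\Ab$ by $\Fb$) and your explicit closing step identifying $\Ab=\Fb$ once $\psi\equiv0$ are minor refinements, not a different argument.
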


\begin{proof}
	We denote by $\mu_{1}^{N}( b \Fb)$ the lowest eigenvalue of the $\rm Schr\ddot{o}dinger$ operator $P_{ b \Fb}^{\Omega}$ with Neumann condition in $L^{2}(\Omega)$. Assume that we have a \textbf{non normal} critical point $(\psi, \Ab)$ for $\mathcal E_{\kappa,H,B_{0}}$. This means that $(\psi, \Ab)\in H^{1}(\Omega)\times H^{1}_{{\rm div}}(\Omega)$ is a solution of \eqref{eq-2D-GLeq} and
	\begin{equation*}\label{psi>0}
		\int_{\Omega}|\psi|^{2}\,dx>0\,.
	\end{equation*}
	Using \eqref{2nd-<} with $ b =\kappa H$, we get
	$$
	\frac{\|(\nabla-i b \Fb)\psi\|_{L^{2}(\Omega)}^{2}}{\|\psi\|^{2}_{L^{2}(\Omega)}}\leq  \kappa^{2}\,,
	$$
	which implies by assumption that the lowest Neumann eigenvalue satisfies,
	\begin{equation*}\label{muN<}
		\mu_{1}^{N}( b \Fb)\leq  \kappa^{2}\,.
	\end{equation*}
Using the leading order asymptotics in \eqref{ful.exp.mu1} with $ b =\kappa H$ and $ H\geq \bar C\kappa^3 $, we obtain
\begin{equation}\label{MJN}
\Lambda_{1}=	\lim_{\kappa H\longrightarrow+\infty}\frac{\mu_{1}^{N}(\kappa H\Fb)}{\sqrt{\kappa H}}\leq \bar C^{-\frac{1}{2}}\,.
\end{equation}
For $ \bar C>(\Lambda_1)^{-2} $, \eqref{MJN} yields a contradiction.
\end{proof}

\begin{figure}[ht!]
	\begin{center}
		\includegraphics[scale=1.5]{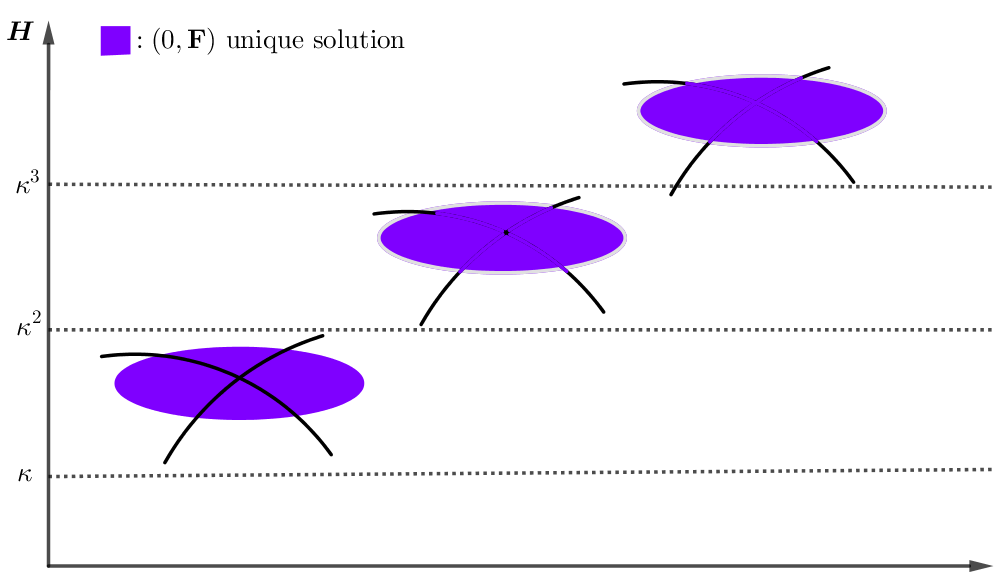} 
		\caption{Schematic representation of the levels of $ H $ that correspond to the trivial solution $(0,\Fb)$.}
		\label{Giorgi}
	\end{center}
\end{figure}

\section{Criterion for the breakdown of superconductivity}\label{CBS}

The purpose of this section is to prove Theorem~\ref{thm:mu<k}. 
One ingredient in the proof is 
 Proposition~\ref{prop:est-psi-var} below on the localization of the order parameter, which    is an adaptation of an analogous result obtained in \cite{HK} for the case where the magnetic field has a non-empty zero set but not self-intersecting. It says that, if $(\psi,\Ab)$ is a critical point of the functional  in \eqref{eq-2D-GLf} and  $H$ is of order $\kappa^3$, then $|\psi|$ is concentrated near the set $\Sigma=\{x\in\Omega~:~B_0(x)=0
 ,\&\,\nabla B_0(x)=0\}$.

 \begin{prop}\label{prop:est-psi-var}
 	Let $\varepsilon>0$ and $ \alpha\in \left (0,1\right ) $. There exist two positive constants $C$ and $\kappa_0$ such that, if
 	\begin{equation}\label{cond:H>kappa}
 		\kappa\geq\kappa_0\,,\quad H\geq\varepsilon\,\kappa^{3}\,,
 	\end{equation}
 	and $(\psi,\Ab)$ is a solution of \eqref{eq-2D-GLeq}, then
 	\begin{equation}\label{l3est}
 		\|\psi\|^{2}_{L^2(\Omega)}\leq C\,\kappa^{-\frac{\alpha}{2}}\|\psi\|^{2}_{L^4(\Omega)}\,.
 	\end{equation}
 \end{prop}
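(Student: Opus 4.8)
The plan is to prove the $L^2$–$L^4$ concentration estimate \eqref{l3est} by combining the a priori bounds of Theorem~\ref{thm-2D-apriori} with a semiclassical Agmon-type localization away from the zero set $\Sigma$. The key mechanism is that, away from $\Sigma$, the effective magnetic field $|B_0|$ is bounded below, so the magnetic Schrödinger operator has a large spectral gap on that region, which forces $|\psi|$ to decay exponentially there. Since $H\geq\varepsilon\kappa^3$ and we test the eigenvalue inequality $\mu_1(\kappa H\Fb)\le\kappa^2$, the natural semiclassical parameter is $h=(\kappa H)^{-1}$ with $\kappa H\sim\kappa^4$, and the ground-state energy scales like $\sqrt{\kappa H}\sim\kappa^2$, matching the threshold $\kappa^2$ exactly at leading order.

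First I would localize using a partition of unity. Fix a length scale $\ell=\ell(\kappa)$ and cover $\Omega$ by a region near $\Sigma$ and its complement. On the complement $\{\dist(x,\Sigma)\ge\ell\}$, Assumption $\eqref{B(x)}_{(a)}$ together with $\eqref{B(x)}_{(c)}$ gives a lower bound $|B_0(x)|\gtrsim\dist(x,\Sigma)^2$ (since $\nabla B_0$ vanishes only on $\Sigma$ and the Hessian is nondegenerate there), so the magnetic form satisfies a lower bound in terms of $\kappa H |B_0|$. Plugging this into the quadratic form $\mathcal Q_{\kappa H\Fb}$ and comparing with the a priori energy bound \eqref{8d-<} yields a standard IMS/Agmon inequality of the form $\int_\Omega e^{2\Phi/\ell^{?}}(\kappa H|B_0|-C\kappa^2)|\psi|^2\,dx\le(\text{boundary/commutator terms})$, which confines the $L^2$-mass of $\psi$ to a $\kappa^{-\delta}$-neighborhood of $\Sigma$ for an appropriate $\delta>0$ tied to the exponent $\alpha$.

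Once the mass is concentrated in a shrinking tube $T_\kappa=\{\dist(x,\Sigma)\le\kappa^{-\delta}\}$ of area $|T_\kappa|\lesssim\kappa^{-\delta}$, I would convert the localization into the stated inequality by Hölder: writing $\|\psi\|_{L^2(\Omega)}^2\le\|\psi\|_{L^2(T_\kappa)}^2+(\text{exponentially small tail})$ and then $\|\psi\|_{L^2(T_\kappa)}^2\le|T_\kappa|^{1/2}\|\psi\|_{L^4(T_\kappa)}^2\le\kappa^{-\delta/2}\|\psi\|_{L^4(\Omega)}^2$. Choosing $\delta$ so that $\delta/2=\alpha/2$, i.e. matching the area of the tube to the target power, produces \eqref{l3est}. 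The exponentially small tail from the Agmon estimate is absorbed into the constant $C$ since it is negligible compared with any polynomial power.

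The main obstacle will be making the Agmon weight and the decay rate quantitatively match the required exponent $\alpha\in(0,1)$ while the effective field $|B_0|$ degenerates \emph{quadratically} (not linearly) near $\Sigma$. Unlike the clean non-self-intersecting case treated in \cite{HK}, here the self-intersection points carry a genuinely two-dimensional, anisotropic degeneracy governed by the model operator $\mathcal J^\varepsilon$ in \eqref{Model.op}, so the effective confining potential is weaker and the exponential decay is slower near $\Sigma$ than along a smooth zero curve. I expect to handle this by choosing the Agmon distance relative to the regularized potential $\max(|B_0|,\kappa^{-\sigma})$ and optimizing $\sigma$ against $\ell$; the delicate point is ensuring the commutator (gradient of the weight) terms stay subordinate to the gain $\kappa H|B_0|\gtrsim\kappa^4\cdot\kappa^{-2\delta}$ on the annular region, which requires $\delta<1$ and hence forces $\alpha<1$, precisely the range in the hypothesis.
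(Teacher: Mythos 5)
There is a genuine gap at the heart of your localization step. You claim that on $\{\dist(x,\Sigma)\ge\ell\}$ one has $|B_0(x)|\gtrsim\dist(x,\Sigma)^2$, so that $|B_0|$ is bounded below away from $\Sigma$ and the order parameter is confined to a small neighborhood of the finitely many points of $\Sigma$. This is false: $B_0$ vanishes on the whole curve $\Gamma$, of which $\Sigma$ is only the set of self-intersection points, so $B_0=0$ at points of $\Gamma$ arbitrarily far from $\Sigma$. The correct lower bound, and the one the paper uses, is $|B_0(x)|\ge C(B_0)\min\big(\dist(x,\Sigma)^2,\dist(x,\Gamma)\big)$, which forces you to excise a neighborhood of the entire zero set $\Gamma$ (and of $\partial\Omega$, where the compact-support lower bound $\int|(\nabla-i\kappa H\Ab)u|^2\ge\kappa H\int|B_0||u|^2-\dots$ is not available). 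Consequently $\psi$ can only be shown to concentrate in the $\ell$-tube around $\Gamma\cup\partial\Omega$; ruling out concentration along $\Gamma\setminus\Sigma$ would require the Montgomery-type model analysis giving energy of order $(\kappa H)^{2/3}$ there, which your sketch does not provide. Your final H\"older step also takes $|T_\kappa|\lesssim\kappa^{-\delta}$ for a $\kappa^{-\delta}$-neighborhood of $\Sigma$; for a finite point set that area is $O(\kappa^{-2\delta})$, another sign that the localization set should really be the curve $\Gamma$, whose $\ell$-neighborhood does have area $O(\ell)$.

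The paper's actual argument is also much lighter than the Agmon machinery you propose. It takes $\ell=\kappa^{-\alpha}$, a cutoff $h$ equal to $1$ on $D_\ell=\{\dist(x,\partial\Omega)>\ell\ \&\ \dist(x,\Gamma)>\ell\}$ and supported in $D_{\ell/2}$, combines the lower bound $\kappa H\int|B_0||h\psi|^2-c\,\kappa\|\psi\|_{L^2}\|h\psi\|_{L^4}^2\le\int|(\nabla-i\kappa H\Ab)h\psi|^2$ with the exact energy identity obtained by pairing \eqref{eq-2D-GLeq}$_a$ with $h^2\psi$, and uses $|B_0|\ge C(B_0)\ell^2$ on $D_\ell$ together with $H\ge\varepsilon\kappa^3$ to ensure that $\kappa H|B_0|-\kappa^2\gtrsim\kappa^{4-2\alpha}$ dominates both $\kappa^2$ and the commutator contribution $\|\nabla h\|_\infty^2\lesssim\kappa^{2\alpha}$. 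This yields $\|\psi\|_{L^2(\Omega)}^2\le2\int_{\Omega\setminus D_\ell}|\psi|^2$, and a single Cauchy--Schwarz on $\Omega\setminus D_\ell$ (area $O(\kappa^{-\alpha})$) gives \eqref{l3est}. No exponential weight, no regularized potential $\max(|B_0|,\kappa^{-\sigma})$, and no optimization of a second parameter is needed; the only place $\alpha<1$ enters is in making $\kappa^{4-2\alpha}$ beat $\kappa^{2}$ and $\kappa^{2\alpha}$.
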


\begin{proof}
	Let $\ell=\kappa^{-\alpha}$ with $ 0<\alpha<1 $ and $$D_{\ell}=\big\{x\in\Omega:\dist(x,\partial\Omega)>\ell~\&~\dist(x,\Gamma)>\ell\big\}\,.$$ Let us introduce a cut off function $h\in C^{\infty}_{c}(\Omega)$ satisfying
	$$
	0\leq h \leq 1~{\rm in}~\Omega\,,\quad h=1~{\rm in}~D_{\ell}\,,\quad {\rm supp}\,h\subset D_{\ell/2}\quad{\rm and}\quad |\nabla h |\leq \frac{C}{\ell}\quad{\rm in}~\Omega\,,
	$$
	where $C$ is a positive constant.\\
	Adapting the proof by Helffer and Kachmar 
	\cite[Equation (6.6)]{HK}, there exists a positive constant $c$ such that,
	$$
	\kappa\,H\int_\Omega|B_{0}(x)|\,|h\psi|^2\,dx-c\,\kappa\,\|\psi\|_{L^{2}(\Omega)}\|h\psi\|_{L^{4}(\Omega)}^2\leq \int_{\Omega}|(\nabla-i\kappa H\Ab)h\psi|^2\,dx\,.
	$$
	Using Cauchy's inequality, we obtain
	$$
	c\,\kappa\,\|\psi\|_{L^{2}(\Omega)}\|h\psi\|_{L^{4}(\Omega)}^{2}\leq c^{2}\|\psi\|_{L^{2}(\Omega)}^{2}+\kappa^2\|h\psi\|_{L^{4}(\Omega)}^{4}\,,
	$$
	which implies that
	\begin{multline}\label{eq:22}
		\int_\Omega\,\left(\kappa\,H\,|B_{0}(x)|-\kappa^{2}\right)\,|h\psi|^2\,dx\leq\int_{\Omega}|(\nabla-i\kappa H\Ab)h\psi|^2\,dx-\kappa^{2}\|h\psi\|_{L^{2}(\Omega)}^{2}
	\\	+c^{2}\|\psi\|_{L^{2}(\Omega)}^{2}+\kappa^{2}\|h\psi\|_{L^{4}(\Omega)}^{4}\,.
	\end{multline}
	Multiplying both sides of \eqref{eq-2D-GLeq}$_a$ by $h^2\psi$, it results from an integration by parts that,
	\begin{equation}\label{eq:2}
	\int_{\Omega} \Big(|(\nabla-i\kappa H \Ab)h\psi|^{2}-\kappa^2 |h\psi|^{2}\Big)\,dx=	\int_{\Omega} |\nabla h|^{2}|\psi|^{2}\,dx-\kappa^{2}\int_{\Omega}h^{2}|\psi|^{4}\,dx\,.
\end{equation}
	Implementing \eqref{eq:2} into \eqref{eq:22}, we get,
	\begin{align*}
		\int_\Omega\,\left(\kappa\,H\,|B_{0}(x)|-\kappa^{2}\right)\,|h\psi|^2\,dx&\leq c^{2}\int_{\Omega}|\psi|^{2}\,dx+\int_{\Omega}|\nabla h|^{2}|\psi|^{2}\,dx+\kappa^{2}\int_{\Omega}(h^{4}-h^{2})|\psi|^{4}\,dx\\
		&\leq c^{2}\int_{\Omega}|\psi|^{2}\,dx+\int_{\Omega}|\nabla h|^{2}|\psi|^{2}\,dx\,.
	\end{align*}
	Here, we have used the fact that $h^{4}\leq h^{2}$ since $0\leq h\leq 1$.\\
	As a consequence of the non degeneracy of $ B_0 $ outside $ \Gamma $, there exists a positive constant $ C(B_0) $  such that
	\begin{equation}\label{eq.LB.nabla}
		|B_0(x)|\geq C(B_0)\min\Big( {\rm dist}(x,\Sigma)^2,{\rm dist}(x,\Gamma)\Big) \geq  C(B_0)\,\ell^2 \,,\qquad \forall x\in D_{\ell}\,.
	\end{equation}
	Therefore, by using \eqref{cond:H>kappa}, we get,
	$$
	\Big(C(B_0)\,\varepsilon\,\kappa^{4-2\alpha}-\kappa^{2}\Big)\int_\Omega|h\psi|^2\,dx\leq c^{2}\int_{\Omega}|\psi|^{2}\,dx+\int_{\Omega}|\nabla h|^{2}|\psi|^{2}\,dx\,.
	$$
	Writing \begin{equation}\label{dec.h}
		\displaystyle\int_{\Omega}|\psi|^{2}\,dx=\int_{\Omega}|h\psi|^{2}\,dx+\int_{\Omega}(1-h^{2})|\psi|^{2}\,dx
	\end{equation} 
    and using the assumption on $h$, we have,
	$$
	\Big(C(B_0)\,\varepsilon\,\kappa^{4-2\alpha}-\kappa^{2}-c^{2}\Big)\int_{\Omega}|h\psi|^{2}\,dx\leq \Big(c^{2}+C\,\kappa^{2\alpha}\Big)\int_{\Omega\setminus D_{\ell}}|\psi|^{2}\,dx\,.
	$$
	For $\kappa$ large enough, $\dsp
	C(B_0)\,\varepsilon\,\kappa^{4-2\alpha}-\kappa^{2}-c^{2}\geq \frac{1}{2}C(B_0)\,\varepsilon\,\kappa^{4-2\alpha}$ and
	$$
	\int_\Omega| h\psi|^{2}\,dx\leq 2\frac{C}{  C(B_0) \,\varepsilon}\,\kappa^{4(\alpha-1)} \int_{\Omega\setminus D_{\ell}}|\psi|^{2}\,dx \,.
	$$
	Using the decomposition in \eqref{dec.h} and that $ 1-h^2 $ vanishes outside $  D_\ell $,  we get further,
	$$
	\int_\Omega|\psi(x)|^{2}\,dx\leq \left(2\frac{C}{  C(B_0) \,\varepsilon}\,\kappa^{4(\alpha-1)}+1\right)\int_{\Omega\setminus D_{\ell}}|\psi|^{2}\,dx\leq  2\int_{\Omega\setminus D_{\ell}}|\psi|^{2}\,dx\,.
	$$
	Finally, by the Cauchy-Schwarz inequality,
	$$
	\int_{\Omega\setminus D_\ell}|\psi|^{2}\,dx\leq |\Omega\setminus D_\ell|^{1/2} \left(\int_{\Omega\setminus D_\ell} |\psi|^{4}\,dx\right)^{\frac{1}{2}}
	\leq C\,\kappa^{-\frac{\alpha}{2}}\left(\int_\Omega |\psi|^{4}\,dx\right)^{\frac{1}{2}}\,.
	$$
	This finishes the proof of the proposition.
\end{proof}

\begin{proof}[Proof of Theorem~\ref{thm:mu<k}]
Suppose that  $(\psi,\Ab)$ is a solution of \eqref{eq-2D-GLeq} with $\|\psi\|_{L^{2}(\Omega)}\neq 0$. Then $H\leq \bar C\kappa^3$, where $\bar C>0$ is the constant in Theorem~\ref{thm:GP}.

 We will prove that $\mu_{1}(\kappa H\Fb)<\kappa^2$. We distinguish between three cases:
	\begin{itemize}
		\item[\textbf{Case~1:}]
		Let$H\geq \varepsilon \kappa^3$ and $0<\varepsilon<\bar C$. Multiplying \eqref{eq-2D-GLeq}$_{a}$ by $\psi$ in $L^{2}(\Omega)$ and using \eqref{eq-2D-GLeq}$_{c}$, we observe that,
		\begin{equation}\label{l6est}
			\kappa^{2}\|\psi\|_{L^{4}(\Omega)}^{4}=-\int_{\Omega}\Big(\big|(\nabla-i\kappa H\Ab)\psi\big|^2-\kappa^{2}|\psi|^{2}\Big)\,dx\,.
		\end{equation}
		The Cauchy-Schwarz inequality yields for any $\delta\in(0,\infty)$,
		\begin{equation}
			\int_{\Omega}|(\nabla-i\kappa H\Ab)\psi|^2\,dx\geq (1-\delta)\int_{\Omega}|(\nabla-i\kappa H\Fb)\psi|^{2}\,dx+(1-\delta^{-1})\,(\kappa H)^{2}\int_{\Omega}|(\Ab-\Fb)\psi |^{2}\,dx\,.
		\end{equation}
		This implies that (after using  the min-max principle),
		$$
		\kappa^{2}\|\psi\|_{L^{4}(\Omega)}^{4}\leq -\Big(\mu_{1}(\kappa H \Fb)+\kappa^2\Big)\,\|\psi\|_{L^{2}(\Omega)}^{2}+\delta\, \|(\nabla-i\kappa H\Fb)\psi\|^{2}_{L^{2}(\Omega)}+\delta^{-1}(\kappa H)^{2}\|(\Ab-\Fb)\psi\|_{L^{2}(\Omega)}^{2}\,.
		$$
		Using \eqref{5d-<} and \eqref{8d-<}, we obtain,
		$$
		\kappa^{2}\|\psi\|_{L^{4}(\Omega)}^{4}\leq -\Big(\mu_{1}(\kappa H \Fb)+\kappa^2\Big)\,\|\psi\|_{L^{2}(\Omega)}^{2}+C\,\delta\,\kappa^{2}\,\|\psi\|_{L^{2}(\Omega)}^{2}+C'\,\delta^{-1}\,\kappa^{2}\,\|\psi\|_{L^{2}(\Omega)}^{2}\|\psi\|_{L^{4}(\Omega)}^{4}\,.
		$$
		From \eqref{l3est}, we find that,
		$$\kappa^{2}\|\psi\|_{L^{4}(\Omega)}^{4}\leq -\Big(\mu_{1}(\kappa H \Fb)+\kappa^2\Big)\,\|\psi\|_{L^{2}(\Omega)}^{2}+C\,\delta\,\kappa^{2-\frac{\alpha}{2}}\,\|\psi\|_{L^{4}(\Omega)}^{2}+C'\,\delta^{-1}\,\kappa^{2-\frac{\alpha}{2}}\,\|\psi\|_{L^{4}(\Omega)}^{6}\,.$$
		Remembering that $\|\psi\|_{L^{2}(\Omega)}\neq 0$ and choosing $\delta=\|\psi\|_{L^{4}(\Omega)}^{2}$, we get,
		$$
	\Big(\mu_{1}(\kappa H \Fb)+\kappa^2\Big)\,\|\psi\|_{L^{2}(\Omega)}^{2}\leq\kappa^{2} \|\psi\|_{L^{4}(\Omega)}^{4}\left(C\,\kappa^{-\frac{\alpha}{2}}-1\right)\,.
		$$
		We deduce for $\kappa$ sufficiently large $\mu_{1}(\kappa H\Fb)<\kappa^2$, which what we needed to prove the first case.\\
		\item[\textbf{Case~2:}] Let $c_{0}\kappa^{-1}\leq H < \varepsilon\kappa^{3}$, where $c_{0}>1$ is sufficiently large so that we can use Theorem~\ref{thm.mu1.ful} with $ b=\kappa H $ and $ M=0 $. We observe as $\kappa\to+\infty$,
		\begin{align*}
			\mu_{1}(\kappa H\Fb)&\leq (\kappa H)^\frac{1}{2} |\lambda^{\rm Hess}_{2}(x)|^{1/2} \mathcal{X}_{1}\Big(\varepsilon(x)\Big)+C\,(\kappa H)^{\frac{1 }{4}}\\
			&< \kappa^2 \epsilon^\frac{1}{2} |\lambda^{\rm Hess}_{2}(x)|^{1/2} \mathcal{X}_{1}\Big(\varepsilon(x)\Big)+C\,\kappa \,.
		\end{align*}
		We can choose $\varepsilon>0$ such that $\epsilon^\frac{1}{2} |\lambda^{\rm Hess}_{2}(x)|^{1/2} \mathcal{X}_{1}\Big(\varepsilon(x)\Big)-1<0$ to get,
		$$
	\mu_{1}(\kappa H\Fb)<\kappa^2\,,\qquad{\rm as}~\kappa\longrightarrow+\infty\,.
		$$
		\item[\textbf{Case~3:}] $H<c_{0}\kappa^{-1}$. Let $x_0\in \Omega$ and let $\chi_{\ell}\in C_{c}^{\infty}(\Omega)$ be a cut-off function satisfying:
		\begin{equation}\label{def:chil}
			0\leq \chi_{\ell} \leq 1~{\rm in}~\Omega\,,\quad {\rm supp}\,\chi_{\ell}\subset B(x_{0},\ell)\,,\quad \chi_{\ell}=1~{\rm in}~B\left(x_{0},\frac{\ell}{2}\right)\quad{\rm and}\quad |\nabla\chi_{\ell}|\leq \frac{C}{\ell}\,.
		\end{equation}
	    We may write,
		$$
		\mu_{1}(\kappa H\Fb)\leq \frac{\dsp\int_{\Omega}|(\nabla-i\kappa H\Fb)\chi_{\ell}|^{2}\,dx}{\|\chi_{\ell}\|_{L^{2}(\Omega)}^{2}}\,.
		$$
		Using the assumptions on $\chi_{\ell}$ and the fact that $\Fb\in C^{\infty}(\overline{\Omega})$, a trivial estimate is,
		\begin{align}\label{up:F-varphi3}
			\int_{\Omega}|(\nabla-i\kappa H\Fb)\chi_{\ell}|^{2}\,dx&\leq \int_{B(x_{0},\ell)}|\nabla\chi_{\ell}|^{2}\,dx+\kappa^{2}H^{2}\int_{B(x_{0},\ell)}|\Fb\chi_{\ell}|^{2}\,dx\nonumber\\
			&\leq C\,(1+(\kappa\,H\,\ell)^{2})\,.
		\end{align}
		This implies that
		$$
		\Big(\mu_{1}(\kappa H\Fb)-\kappa^2\Big)\|\chi_{\ell}\|_{L^{2}(\Omega)}^{2}\leq C\Big(1+(\kappa\,H\,\ell)^{2}\Big)-\,(\kappa\ell)^{2}\,.
		$$
		We select $\ell=\kappa^{-\frac{1}{2}}$ and remember that $H<c_{0}\,\kappa^{-1}$.  We find that,
		$$
			\Big(\mu_{1}(\kappa H\Fb)-\kappa^2\Big)\|\chi_{\ell}\|_{L^{2}(\Omega)}^{2}< C\,(1+\kappa^{-1})- \kappa\,,
		$$
		and consequently, for $\kappa$ sufficiently large,
		$$
		\mu_{1}(\kappa H\Fb)<\kappa^2\,.
		$$
	\end{itemize}

\end{proof}

\end{document}